\newcommand{\Rset}{\mathbb{R}}
\newcommand{\Ffam}{\mathcal{F}}
\newcommand{\Mfam}{\mathcal{M}}
\newcommand{\Afam}{\mathcal{A}}
\newcommand{\Dfam}{\mathcal{D}}
\newcommand{\Lfam}{\mathcal{L}}
\newcommand{\Vfam}{\mathcal{V}}
\newcommand{\opt}{\mathrm{OPT}}
\newtheorem{theorem}{Theorem}
\newtheorem{lemma}{Lemma}
\newtheorem{corollary}{Corollary}
\newtheorem{definition}{Definition}
\begin{document}

\title{Approximation algorithms for highly
connected \\ multi-dominating sets in unit disk graphs}

\author{Takuro Fukunaga\thanks{This work was done when the authors
was at National Institute of Informatics, Japan. Email: {\tt takuro.fukunaga@riken.jp}}}

\date{}

\maketitle

\begin{abstract}
  Given an undirected graph on a node set $V$
  and positive integers $k$ and $m$,
  a $k$-connected $m$-dominating set ($(k,m)$-CDS)
  is defined as a subset $S$ of $V$ 
  such that each node in $V \setminus S$
  has at least $m$ neighbors in $S$, and
 a $k$-connected subgraph is induced by $S$.
  The weighted $(k,m)$-CDS problem is to
  find a minimum weight $(k,m)$-CDS
  in a given node-weighted graph.
  The problem is called
  the unweighted $(k,m)$-CDS problem
  if the objective
  is to minimize the cardinality of a $(k,m)$-CDS.
  These problems
  have been actively studied for unit disk graphs,
  motivated by the application of constructing a virtual backbone 
  in a wireless ad hoc network.
  In this paper,
  we consider the case in which $k \leq m$,
  and we present a simple
  $O(k5^k)$-approximation algorithm for the unweighted $(k,m)$-CDS
 problem,
 and a primal-dual $O(k^2 \log k)$-approximation algorithm for the
 weighted $(k,m)$-CDS problem. 
 \end{abstract}

\section{Introduction}
\label{sec.intro}

A wireless ad hoc network is a decentralized wireless network.
Compared with a traditional communication
network, it has the advantage of not requiring
any infrastructure, such as base stations and WiFi routers; this is a
great benefit when operating sensor networks, vehicle networks, or
networks in disaster areas.  However, for efficient operation of a
wireless ad hoc network, we have to overcome many technical challenges.
One of these is to reduce the redundant communication caused by flooding
messages.  
A typical solution 
for this task
is to construct a
\emph{virtual backbone network}, as follows:
we choose several backbone nodes from the network,
and then we construct a subnetwork that comprises only these backbone nodes.
When a message arrives, first, it is delivered to a backbone node,
next, it is flooded to all the backbone nodes via the virtual backbone
network,
and finally, the destination node receives the message from an adjacent backbone node.
This reduces redundancy more as the virtual backbone network is
smaller.
However, it is also important that the virtual backbone network
is fault tolerant.

Developing algorithms for constructing a virtual backbone network
is an active area of research.
A promising approach
is to formulate
a virtual backbone network 
as a \emph{connected dominating set} (CDS), and
to consider an algorithm for finding a minimum cardinality or a minimum
weight CDS.
For an undirected graph with a node set $V$, a CDS
is defined as a subset $S$ of $V$ such that
each node in $V \setminus S$ is adjacent to at least one node in $S$,
and $S$ induces a connected subgraph.
This approach has gained in popularity,
and many papers have been published~~\cite{AmbuhlEMN06,ChengHLWD03,GuhaK98,GuhaK99,MaratheBHRR95,ZouLGW09}.
It is typically assumed that the input graph is a unit disk graph,
which is a natural choice for modeling a wireless network.
Since the problem of finding the minimum
cardinality CDS
is NP-hard even for unit disk graphs~\cite{ClarkCJ90},
some studies have considered approximation algorithms.

A CDS does not give a fault-tolerant virtual backbone network.
This is because a CDS is only required to be connected,
and each node outside a CDS is required to have only one neighbor in the CDS.
Hence, if a backbone node fails,
the virtual backbone network may be disconnected,
or a non-backbone node may lose access to the virtual backbone network.
To overcome this disadvantage, 
Dai and
Wu~\cite{Dai:2006}
proposed replacing a CDS by a \emph{$k$-connected $k$-dominating set},
and they addressed the problem of finding a minimum $k$-connected
$k$-dominating set in a unit disk graph.
For a graph with the node set $V$,
a subset $S$ of $V$
is called \emph{$k$-connected} if the subgraph induced by $S$
is $k$-connected (i.e., it is connected even if any $k-1$ nodes are
removed),
and is called \emph{$k$-dominating} if each node $v \in V\setminus S$ has $k$
neighbors in $S$.
Triggered by their study, 
much attention has been paid to this problem,
extending the notion of 
a $k$-connected $k$-dominating set to a more-general $k$-connected $m$-dominating
set ($(k,m)$-CDS).

The problem of finding a minimum cardinality $(k,m)$-CDS in a unit disk
graph is called the \emph{unweighted $(k,m)$-CDS problem}.
If each node is given
a nonnegative weight, and
the objective is to minimize the weight of a $(k,m)$-CDS,
then this is called the \emph{weighted $(k,m)$-CDS problem}.
As for the unweighted $(k,m)$-CDS problem,
several 
constant-approximation algorithms were given
for $k\leq 3$~\cite{Shang:2007jg,ShiZZW16,WangKAGLZW13,Wang:2015,Zhang16}.
As for the weighted $(k,m)$-CDS problem,
there are several constant-approximation algorithms for $k=m=1$~\cite{AmbuhlEMN06,ZouLGW09},
but no approximation algorithm was known for
the case of $(k,m)\neq (1,1)$ before our study
(see Section~\ref{sec.relatedworks} for more literature reviews).

After these previous studies,
a natural question arises as to whether there
is a constant-approximation algorithm
for the unweighted $(k,m)$-CDS problem with $k \geq 4$,
and for the weighted problem with $(k,m)\neq (1,1)$.
For the unweighted problem, this question has been already addressed in
both \cite{WangKAGLZW13} and \cite{Wang:2015}.

\subsection{Our results}
\label{sec.ourresults}
We answer the above question affirmatively.
The main contribution of this paper is to present
constant-approximation algorithms
for both the unweighted and the weighted $(k,m)$-CDS problems
when $k$ is a constant and $k \leq m$.

Specifically, we present two algorithms;
one is an $O(k 5^k)$-approximation algorithm for the unweighted $(k,m)$-CDS problem,
and the other is an $O(k^2 \log k)$-approximation algorithm for the
weighted $(k,m)$-CDS problem.
Notice that both algorithms achieve a constant factor if $k$ is a constant.
The approximation factor of the latter algorithm is better than that of
the former,
and it can be applied to the weighted problem, while the former algorithm is
restricted to the unweighted problem.
However, the former algorithm is simple, easy to analyze,
and can also be applied to other graph classes.
In fact,
 for $k\in \{2,3\}$,
 the former algorithm is obtained by introducing more
 specification into the algorithms
 given in~\cite{Shang:2007jg,Wang:2015}.
 Hence our analysis on the former algorithm gives a simple proof for the
 fact that the algorithms in~\cite{Shang:2007jg,Wang:2015} achieve
 constant-approximation for $k\in \{2,3\}$.

When the author published a preliminary version \cite{Fukunaga15arxiv} of this work,
he noticed that
Shi, Zhang, and Du~\cite{ShiZD15} also presented an $O(k^2 \log k)$-approximation
algorithm for the weighted $(k,m)$-CDS problem in unit disk graphs with
$k\leq m$.
The journal version of their paper was published as Shi~et~al.\ \cite{ShiZMD17}.
Their study has been done independently from ours,
but the taken approach is similar to ours
in the fact that
both of their and our algorithms require
computing an $m$-dominating set,
and then convert it into a $k$-connected $m$-dominating set.
In the first step,
both of the algorithms require
an approximation algorithm for the minimum weight $m$-dominating
set problem.
Shi, Zhang, and Du~\cite{ShiZD15}  suggested using the algorithm
given in~\cite{WillsonZWD15} for this purpose, but
the algorithm of~\cite{WillsonZWD15} 
deals with a problem slightly different from the minimum weight $m$-dominating set problem.
In the present paper, we give the first nontrivial approximation
algorithm for this problem (in Section~\ref{subs.dominating}),
and hence
the algorithm of Shi, Zhang, and Du~\cite{ShiZD15}
achieves the claimed approximation guarantee only when combined with our result.
In their journal version~\cite{ShiZMD17},
this fact is described.
In the second step of converting the $m$-dominating set into a
$k$-connected $m$-dominating set,
the approach of  \cite{ShiZD15,ShiZMD17}
and ours are different.
We give an $O(k^2)$-approximation algorithm for
 the augmentation
 problem,
 which is a special case of the optimization problem arising in the
 second step.
 From this algorithm, we derive an $O(k^2 \log k)$-approximation for the general
 case.
 On the other hand,  \cite{ShiZD15,ShiZMD17} gave a reduction from the general case
 to the edge-weighted subset $k$-connected subgraph problem.
 Their approach is weaker than our result
 in the fact that their reduction does not give a better algorithm for the
 augmentation problem.
 To be fair, their reduction approach has a merit that
  enables to use any algorithm for the edge-weighted subset $k$-connected subgraph problem
 while our analysis applies to a specific primal-dual algorithm.

Although this is not our main focus, 
we also present an improved analysis of the algorithm for the unweighted
$(2,m)$-CDS problem 
in Shang~et~al.~\cite{Shang:2007jg};
see Appendix~\ref{app.rect}.
Shi~et~al.~\cite{ShiZZW16} pointed out that the analysis of \cite{Shang:2007jg}
contains a flaw and the approximation factor given therein is not correct.
Shi~et~al.\ also gave a modified analysis, but our new analysis given in
Appendix~\ref{app.rect} attains a 
better approximation factor.

 \subsection{Organization}

 The remainder of this paper is organized as follows. 
 Section~\ref{sec.relatedworks} surveys related works.
 Section~\ref{sec.prelim} introduces preliminary facts used in this
 paper.
 Section~\ref{sec.simple} presents our $O(5^k k)$-approximation
 algorithm for the unweighted $(k,m)$-CDS problem, and
 Section~\ref{sec.primal-dual} provides our $O(k^2 \log
 k)$-approximation algorithm for the weighted $(k,m)$-CDS
 problem. Section~\ref{sec.conclusion} concludes the paper.
 Appendix~\ref{app.rect} rectifies the analysis on the algorithm of Shang~et~al.~\cite{Shang:2007jg}
 for the unweighted $(2,m)$-CDS problem.
 
\section{Related works}
\label{sec.relatedworks}

The study of the $(1,1)$-CDS problem for general graphs was initiated 
by Guha and Khuller~\cite{GuhaK98}. They presented an
$O(H(\Delta))$-approximation algorithm for the unweighted $(1,1)$-CDS
problem in graphs
with maximum degree $\Delta$,
where $H(\Delta)$ denotes the $\Delta$-th harmonic number.
They also
gave a reduction from the set cover to the unweighted $(1,1)$-CDS problem,
which shows that no polynomial-time algorithm achieves
an approximation factor $(1-\epsilon)H(\Delta)$
for any fixed $\epsilon \in (0, 1)$ unless
${\rm NP} \subseteq {\rm DTIME}[n^{O(\log \log n)}]$.
For the weighted $(1,1)$-CDS problem,
they gave an $2.613\ln n$-approximation
algorithm, where $n$ is the number of nodes in
the given graph.
This algorithm was improved by the same authors to $(1.35+\epsilon)\ln n$ in~\cite{GuhaK99}.

The unweighted $(1,1)$-CDS problem is NP-hard, even in unit disk graphs~\cite{ClarkCJ90}.
 Marathe~et~al.~\cite{MaratheBHRR95}
showed that the unweighted $(1,1)$-CDS problem in unit disk graphs admits a 10-approximation algorithm.
This has been improved by subsequent studies, and the current
best result is due to 
Cheng~et~al.~\cite{ChengHLWD03}, who gave a
polynomial-time approximation scheme
(PTAS) for the unweighted
$(1,1)$-CDS problem in unit disk graphs; note that the existence of a PTAS means
that for
any fixed constant $\epsilon >0$, there exists a 
($1+\epsilon$)-approximation algorithm that runs in polynomial time.

The first constant-approximation algorithm for
the weighted $(1,1)$-CDS problem in unit disk graphs was due to
Amb{\"{u}}hl~et~al.~\cite{AmbuhlEMN06}.
The current best approximation factor
for the same problem
is
$3+2.5\rho+\epsilon$, where $\rho$ is the approximation factor for the
edge-weighted Steiner tree in general graphs.
 This is achieved by
combining the $(3+\epsilon)$-approximation algorithm
due to Willson~et~al.\ \cite{WillsonZWD15}
for the minimum
weight 1-cover problem, and the $2.5\rho$-approximation
algorithm due to Zou~et~al.\ \cite{ZouLGW09}
for the node-weighted Steiner tree problem.
Currently it is known that $\rho \leq 1.39$~\cite{ByrkaGRS13}.
Zou~et~al.\ \cite{ZouWXLDWW11} claimed a $(5+\epsilon)$-approximation
algorithm for the weighted $(1,1)$-CDS problem, but their algorithm does
not seem to run in polynomial time for general node weights.

The first study of the $(k,m)$-CDS problem
was conducted by Dai and Wu~\cite{Dai:2006}.
They found several heuristic algorithms for the unweighted problem with
$k=m$ in unit disk graphs.
Thus far, several constant-approximation algorithms have been given for the
unweighted $(k,m)$-CDS problem in unit disk graphs~\cite{Shang:2007jg,WangTD09,WangKAGLZW13,Wang:2015}
For $k \leq 2$, Shang~et~al.\ \cite{Shang:2007jg} gave an 
approximation algorithm,
and it achieves a current best approximation factor for $k=1$.
As mentioned in Section~\ref{sec.ourresults},
Shi~et~al.~\cite{ShiZZW16}
pointed out a flow in the analysis of
Shang~et~al.\ \cite{Shang:2007jg} for $k=2$,
and gave its rectification.
Simultaneously Shi~et~al.\ presented an algorithm for the unweighted $(2,m)$-CDS problem on general
graphs. This algorithm achieves the current best approximation factor for
the unweighted $(2,m)$-CDS problem even on
unit disk graphs.
In Appendix~\ref{app.rect},
we present an improved analysis of Shang~et~al.~\cite{Shang:2007jg} for
$k=2$, but the obtained approximation factors are not better than 
Shi~et~al.
For $k=3$,
the current best approximation algorithm is 
due to
Zhang~et~al.\ \cite{Zhang16} (their algorithm is defined for general graphs).
Several previous papers
claimed constant-approximation algorithms for $k\geq 4$, but
Kim~et~al.~\cite{KimWLZW10} showed that
all of them had technical errors.
As far as we know, there was no known constant-approximation algorithm for the
unweighted $(k,m)$-CDS problem with $k\geq 4$ or for
the weighted $(k,m)$-CDS problem with $(k,m)\neq (1,1)$ before our study.
For the weighted problem, it was not even known whether a
constant-approximation algorithm exists for the problem of finding a
minimum weight $m$-dominating set in a unit disk graph when $m > 1$.

We note that 
distributed algorithms for
the unweighted $(1,1)$-CDS problem are also actively being studied.
Since this paper will focus on centralized algorithms,
we will only refer to a few of these
studies~\cite{Alzoubi02,FunkeKMS06,YuJYLC2015}.

 \section{Preliminaries}
 \label{sec.prelim}

 Let $G=(V,E)$ be an undirected graph with a node set $V$ and an edge
 set $E$.
 We denote $|V|$ by $n$.
 For $X \subseteq V$, let $G[X]$ denote the subgraph of $G$
 induced by $X$; i.e., its node set is $X$ and the edge set consists of
 the edges that join nodes in $X$.
 Throughout the paper, on the power set of $V$, we define maximality and minimality
with respect to inclusion.
 In other words, $X$ is minimal in a family $\Vfam \subseteq 2^V$
 if there is no $Y \in \Vfam$ with $Y \subset X$,
 and $X$ is maximal in $\Vfam$ if there is no $Y\in \Vfam$
 with $X \subset Y$.

 In a \emph{unit disk graph}, each node is placed on the two-dimensional
 Euclidean space,
 and two nodes are joined by an edge if and only if the distance between
 them is not larger than a unit length.
 The following property of unit disk graphs is well known,
 and it is used in \cite{Shang:2007jg}.

 \begin{lemma}\label{lem.kissingnum}
  Let $G=(V,E)$ be a unit disk graph. Let $v \in V$,
  and let $u_1,\ldots,u_6$ be distinct neighbors of $v$.
  Then $E$ includes an edge that joins two nodes in $\{u_1,\ldots,u_6\}$.
 \end{lemma}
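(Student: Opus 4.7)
The plan is to argue geometrically using the unit disk structure. Each $u_i$ lies in the closed unit disk centered at $v$, since $|v u_i| \leq 1$ by the definition of a unit disk graph. I will show that among six such points, two must be within distance $1$ of each other, which immediately gives the desired edge.

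First I would consider the six rays emanating from $v$ toward $u_1,\ldots,u_6$. These rays partition the full angle around $v$ into six angular sectors whose measures sum to $2\pi$. By the pigeonhole principle, some pair of consecutive rays (say those toward $u_i$ and $u_j$) forms an angle $\theta := \angle u_i v u_j \leq 2\pi/6 = \pi/3$.

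Next I would show $|u_i u_j| \leq 1$ by a standard triangle-inequality argument. Consider the triangle with vertices $v$, $u_i$, $u_j$. Since its three interior angles sum to $\pi$ and the angle at $v$ is at most $\pi/3$, at least one of the other two angles is at least $\pi/3$, hence at least as large as the angle at $v$. Since in any triangle a larger angle is opposite a longer side, the side $u_i u_j$ (opposite the angle at $v$) is no longer than at least one of $v u_i$ or $v u_j$, both of which are at most $1$. Therefore $|u_i u_j| \leq 1$, so by the definition of a unit disk graph $\{u_i, u_j\} \in E$.

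I do not anticipate a serious obstacle here; the entire argument is elementary planar geometry, and the only subtlety is to be careful that the pigeonhole yields an angle of at most $\pi/3$ (not strictly less), and that the triangle-inequality step handles the boundary case (angle exactly $\pi/3$, which corresponds to an equilateral triangle where $|u_i u_j|$ equals the other sides). Both of these are handled by using non-strict inequalities throughout.
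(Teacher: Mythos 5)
Your proof is correct: the pigeonhole bound of $\pi/3$ on some angle $\angle u_i v u_j$, combined with the fact that the side opposite the smallest angle of a triangle is not the longest, gives $|u_i u_j| \leq \max\{|v u_i|, |v u_j|\} \leq 1$, and your handling of the boundary (equilateral) case with non-strict inequalities is exactly right. The paper states this lemma as a well-known fact and omits the proof entirely, so your argument is the standard one it implicitly relies on.
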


 For the most part, our algorithms will require only the property stated
 in Lemma~\ref{lem.kissingnum}; the exception is the subroutine
 of computing an $m$-dominating set
 in our $O(k^2 \log k)$-approximation algorithm
 (presented in Section~\ref{subs.dominating}).
 
For $X \subseteq V$, we denote the set of neighbors of $X$ in $G$ by $\Gamma(X)$. In
other words, $\Gamma(X)=\{v \in V\setminus X \colon uv \in E \text{ for
some } u \in X\}$.
We also let $X^+$ denote $X \cup \Gamma(X)$ for any $X \subseteq V$.
For $X,T \subseteq V$, we simply denote $\Gamma(X) \cap T$ by
$\Gamma_T(X)$.
The following property of the function $\Gamma$
has been frequently used in previous works on the survivable networks
design (see e.g., \cite{Nutov12}).

\begin{lemma}\label{lem.submodularity}
 For any $X,Y,T \subseteq V$,
  the following hold:
 \begin{equation}\label{eq.submodular}
 |\Gamma_T(X)| + |\Gamma_T(Y)| \geq
  |\Gamma_T(X \cap Y)| + |\Gamma_T(X \cup Y)|,
 \end{equation}
 \begin{equation}\label{eq.negamodular}
 |\Gamma_T(X)| + |\Gamma_T(Y)| \geq
  |\Gamma_T(X \setminus Y^+)| + |\Gamma_T(Y \setminus X^+)|.
 \end{equation}
\end{lemma}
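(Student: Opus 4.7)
The plan is to prove both inequalities by a single uniform device: a node-by-node counting argument. For any $Z\subseteq V$ the contribution of a fixed $v\in T$ to $|\Gamma_T(Z)|$ is the indicator $\mathbf{1}[v\in\Gamma_T(Z)]\in\{0,1\}$, so summing over $v\in T$ reduces each inequality to checking, for every fixed $v$, that its indicator sum on the left-hand side dominates its indicator sum on the right. The remaining work is a short case analysis driven by where $v$ sits relative to $X$ and $Y$, together with a minor observation about how neighborhoods interact with the ``$+$'' operation.

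For (\ref{eq.submodular}) I would split into the four cases defined by membership of $v$ in $X$ and $Y$. If $v\in X\cap Y$ all four indicators vanish. If $v\in X\setminus Y$, the only possibly nonzero indicators are $\mathbf{1}[v\in\Gamma_T(Y)]$ on the left and $\mathbf{1}[v\in\Gamma_T(X\cap Y)]$ on the right, and the latter implies the former since $X\cap Y\subseteq Y$; the case $v\in Y\setminus X$ is symmetric. The substantive case is $v\notin X\cup Y$: here a neighbor of $v$ in $X\cap Y$ is automatically a neighbor in both $X$ and in $Y$, while a neighbor in $X\cup Y$ is a neighbor in at least one of $X$ and $Y$, so the two indicators on the right are dominated termwise by the two on the left.

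For (\ref{eq.negamodular}) the same framework applies, but the case analysis hinges on the following structural observation: if $v\in X$ then $\mathbf{1}[v\in\Gamma_T(Y\setminus X^+)]=0$. Indeed $v\in X\subseteq X^+$, and any neighbor $u$ of $v$ satisfies $u\in\Gamma(X)\subseteq X^+$, so $v$ has no neighbor in $Y\setminus X^+$. By symmetry, $v\in Y$ forces $\mathbf{1}[v\in\Gamma_T(X\setminus Y^+)]=0$. The four cases now resolve quickly. For $v\in X\cap Y$ both sides give zero. For $v\in X\setminus Y$ (symmetrically $v\in Y\setminus X$) the right-hand side reduces to $\mathbf{1}[v\in\Gamma_T(X\setminus Y^+)]$; whenever this indicator equals $1$, the condition $v\notin X\setminus Y^+$ combined with $v\in X$ forces $v\in Y^+$, and together with $v\notin Y$ this means $v$ has a neighbor in $Y$, matching the unique potentially nonzero term $\mathbf{1}[v\in\Gamma_T(Y)]$ on the left. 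For $v\notin X\cup Y$ the inclusions $X\setminus Y^+\subseteq X$ and $Y\setminus X^+\subseteq Y$ yield termwise domination immediately.

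The main obstacle, and really the only substantive step, is organizing the case analysis for (\ref{eq.negamodular}) cleanly, since membership in $X^+$ conflates being in $X$ with being adjacent to $X$, so naive termwise domination of indicators fails. Once the structural observation above is isolated, each of the four cases becomes a one-line check.
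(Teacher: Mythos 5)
Your proof is correct. For inequality \eqref{eq.submodular} your node-by-node counting is essentially the paper's own argument, just organized by the four membership cases of $v$ relative to $X$ and $Y$ rather than by which side of the inequality $v$ contributes to; the content is the same. For inequality \eqref{eq.negamodular}, however, you take a genuinely different route: the paper deduces it from \eqref{eq.submodular} by the complementation trick, replacing $Y$ with $Y':=V\setminus Y^+$ and verifying the set identities $\Gamma_T(Y)=\Gamma_T(Y')$, $X\setminus Y^+=X\cap Y'$, and $Y\setminus X^+=V\setminus(X\cup Y')^+$, whereas you prove it directly by extending the same indicator-summation scheme. The hinge of your argument --- that $v\in X$ forces $v\notin\Gamma(Y\setminus X^+)$ because every neighbor of such a $v$ lies in $X^+$ --- is exactly the structural fact that makes the direct case analysis close, and your handling of the case $v\in X\setminus Y$ (where $v\in\Gamma_T(X\setminus Y^+)$ together with $v\in X$ forces $v\in Y^+\setminus Y=\Gamma(Y)$) is the one step that requires care; you get it right. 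What the paper's reduction buys is brevity once \eqref{eq.submodular} is in hand; what your approach buys is uniformity and transparency --- both inequalities fall to the same elementary counting, and you avoid having to verify the slightly fiddly complementation identities. Both are valid proofs of the lemma.
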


Let $X \subseteq V$.
For $T\subseteq V$,
$X$ is called a
 \emph{$T$-cut} if $X \subseteq T$ and  $X \neq \emptyset \neq T
 \setminus X^+$, and it is called a
\emph{Steiner $T$-cut} if
$X \cap T \neq \emptyset \neq T\setminus X^+$.
 Notice that a Steiner-$T$ cut is not necessarily a subset of $T$
 while a $T$-cut is so.
For $T \subseteq V$ and $r \in T$,
 a $T$-cut $X$
 (resp., a Steiner $T$-cut $X$)
 is called a $(T,r)$-cut
 (resp., Steiner $(T,r)$-cut)
 if $r\not\in X^+$.
A graph $G$ is $k$-connected
when it is connected even when any $k-1$ or fewer nodes are removed from the graph.
We note that a graph on at most $k$ nodes is $k$-connected by definition if it is a complete graph.
By Menger's theorem, a graph $G$ is $k$-connected if and only if $|\Gamma(X)|
\geq k$ for any nonempty $X \subseteq V$ with $X^+ \neq V$.
A subset $T$ of $V$ is $k$-connected if and only if $|\Gamma_T(X)|\geq k $
for any $T$-cut $X$
(recall that the $k$-connectivity of $T$ is defined by the
$k$-connectivity of $G[T]$).

Our algorithms take the same approach
proposed in
previous studies;
they compute an $m$-dominating set in the first step, and
increase its connectivity to $k$.
In our algorithms, this is achieved by solving
the following problems.

 \begin{definition}[Augmentation problem]
  \label{def.augmentation}
 Given an undirected graph $G=(V,E)$, a nonnegative weight $w(v)$ for
 each node $v \in V$, and a $(k-1,m)$-CDS
 $T$ of $G$, find $S \subseteq V \setminus T$
 that minimizes $\sum_{v \in S}w(v)$, 
 subject to the condition that $T \cup S$ is $k$-connected in $G$ (i.e.,
  $G[T\cup S]$ is a $k$-connected graph).
 \end{definition}

 For $T \subseteq V$, a path $P$ is called a \emph{$T$-path}
 if both end nodes of $P$ are included in $T$, and no inner nodes of $P$
 are included in $T$. 

A family $\Lfam$ of subsets of $V$ is said to be 
\emph{laminar}
if any $X,Y \in \Lfam$ satisfy $X \subseteq Y$, $Y \subseteq X$,
or $X \cap Y = \emptyset$.
$\Lfam$ is said to be
\emph{strongly laminar}
if any $X,Y \in \Lfam$ satisfies $X \subseteq Y$, $Y \subseteq X$,
or $X \cap Y^+ = \emptyset = X^+ \cap Y$.
Let $\Lfam$ be a laminar family.
If $X$ is minimal in $\Lfam$,
we call $X$ a \emph{leaf} of $\Lfam$.
For some $Y \in \Lfam$,
if $X$ is a maximal member of $\Lfam$ 
subject to $X \subset Y$,
then we say that $X$ is a \emph{child} of $Y$.

\section{Simple algorithm for the unweighted problem}
\label{sec.simple}

In this section, we present an $O(k 5^k )$-approximation algorithm for the unweighted
$(k,m)$-CDS problem
with $m \geq k$.
We may assume $G$ is $k$-connected,
since otherwise $G$ has no $(k,m)$-CDS.

Our algorithm computes an $m$-dominating set by 
applying the constant-approximation algorithm given in
\cite{Shang:2007jg},
and then increases its connectivity by adding nodes.
Let us explain this in more details.
Let $T'$ be an $m$-dominating set.
Note that $|T'|\geq m \geq k$.
First,
we choose $k$ roots $r_1,\ldots,r_k \in T'$.
For each $i=1,\ldots,k$,
we compute a node set $S_i \subseteq V \setminus T'$
such that the connectivity between $r_i$ and each node $v \in T'\setminus
\{r_i\}$ is at least $k$ in the graph $G[T' \cup S_i]$
(i.e., $G[T' \cup S_i]$ includes $k$ internally disjoint paths between $r_i$
and each node $v \in T' \setminus \{r_i\}$).
Our solution is $T' \cup (\bigcup_{i=1}^k S_i)$.

Let us see that $T' \cup (\bigcup_{i=1}^k S_i)$ is a $(k,m)$-CDS.
We can see that each pair of nodes in $T'$ is $k$-connected in
$G[T' \cup (\bigcup_{i=1}^k S_i)]$
(we say in this case that \emph{$T'$ is $k$-connected in $G[T' \cup (\bigcup_{i=1}^k S_i)]$}),
as follows.
If $u,v \in T'$ is not $k$-connected in $G[T' \cup (\bigcup_{i=1}^k
S_i)]$,
there exists $U \subseteq T' \setminus \{u,v\}$
with $|U|\leq k-1$
such that $u$ and $v$ are not connected after removing $U$.
Since $|U| \leq k-1$, there exists $r_i$ such that $r_i \notin U$.
This $r_i$ is not connected to $u$ or $v$ in the graph after removing
$U$, which contradicts the definition of $S_i$.
This means that $T'$ is $k$-connected in
$G[T' \cup (\bigcup_{i=1}^k S_i)]$.
Moreover, from the facts that $T'$ is $m$-dominating and $m \geq k$,
it follows that $T'\cup (\bigcup_{i=1}^k S_i)$ is also $k$-connected.
Indeed, we can observe that any node in $\bigcup_{i=1}^k S_i$ is not disconnected from other
nodes even after $k-1$ nodes are removed from $G[T'\cup (\bigcup_{i=1}^k S_i)]$.
Therefore,  $T' \cup (\bigcup_{i=1}^k S_i)$ is a $(k,m)$-CDS.

We show that $|S_i| \leq (5^k-1)|T'|$ holds for each $i=1,\ldots,k$.
Since $|T'| =O(1) \cdot \opt$,
this implies that $|T' \cup (\bigcup_{i=1}^k S_i)| \leq O(k5^k) \cdot \opt$.
Hence $T' \cup (\bigcup_{i=1}^k S_i)$  achieves the approximation factor $O(k5^k)$.

To compute $S_i$, we repeat solving the augmentation problem.
Here, the definition of the augmentation problem is slightly different
from Definition~\ref{def.augmentation};
$T$ is not necessarily $(k-1)$-connected here.
Instead, the connectivity between $r_i$ and each node $v \in T\setminus
\{r_i\}$ is $j-1$ in $G[T]$ for some integer $j$
with $1 \leq j \leq k$,
and we are required to compute a minimum size $S_{i,j}$
such that
 connectivity between $r_i$ and each node in $T\setminus \{r_i\}$ is $j$ in $G[T
 \cup S_{i,j}]$.
We call this version of the augmentation problem by the \emph{rooted
augmentation problem}.
For the rooted augmentation problem,
 we prove that it is possible to find a solution $S_{i,j}$
 such that $|S_{i,j}|\leq 4 |T|$.
 When we compute $S_{i,j}$, we solve the problem
 with setting $T$ to $T' \cup (\bigcup_{j'=1}^{j-1}S_{i,j'})$.
 This gives $|T' \cup (\bigcup_{j'=1}^{j} S_{i,j'})|
 \leq |S_{i,j}| + |T' \cup (\bigcup_{j'=1}^{j-1} S_{i,j'})| \leq 5|T'
 \cup (\bigcup_{j'=1}^{j-1} S_{i,j'})|$
 holds for any $j=1,\ldots,k$.
 $S_i$ is defined as $\bigcup_{j'=1}^{k} S_{i,j'}$.
 Then $|T' \cup S_i|= |T'\cup (\bigcup_{j=1}^k S_{i,j})| \leq 5^k |T'|$,
 indicating that $|S_i| \leq (5^k -1)|T'|$.
 
 In the rest of this section, we discuss the rooted augmentation
 problem.
 For simplicity,
 we denote $r_i, j, S_{i,j}$ by $r$, $k$, and $S$, respectively.
 Namely, the connectivity between $r$ and each $v \in T \setminus \{r\}$ is $k-1$
 in $G[T]$, and we are asked to find a minimum size $S \subseteq V
 \setminus T$
 such that the connectivity between $r$ and each $v \in T \setminus \{r\}$ is at least $k$ in
 $G[T \cup S]$.

 By the definition of $T$,
all $(T,r)$-cuts $X$ satisfy $|\Gamma_T(X)|\geq k-1$.
We say that a $(T,r)$-cut $X$ is a \emph{demand cut}
if $|\Gamma_T(X)|=k-1$.
We also say that a $T$-path $P$ \emph{covers} a demand cut $X$
if one end node of $P$ is in $X$, and the other end node is in
$T\setminus X^+$.

   \begin{lemma}\label{lem.Ucut}
    Let $m \geq k$,
    and let $T$ be an $m$-dominating set of a graph $G=(V,E)$
    such that the connectivity between $r \in T$ and each node in $T\setminus \{r\}$ is
    at least $k-1$ in $G[T]$.
  Let $S \subseteq V \setminus T$.
  If a $(T\cup S,r)$-cut $X$ in $G$ satisfies $|\Gamma_{T \cup
  S}(X)| \leq k-1$,
  then $X \cap T$ is a demand cut covered by no $T$-path in $G[T \cup S]$.
 \end{lemma}
   \begin{proof}
    Since $T$ is an $m$-dominating set, if $X \subseteq S$, 
   then $|\Gamma_{T\cup S}(X)| \geq |\Gamma_T(X)|\geq m \geq k$ holds,
   which contradicts the assumption.
   Hence $X \cap T$ is a $(T,r)$-cut.
  Since $\Gamma_T(X\cap T) \subseteq \Gamma_T(X) \subseteq  \Gamma_{T\cup S}(X)$,
    we have $|\Gamma_T(X\cap T)|= | \Gamma_{T\cup S}(X)|=k-1$.
    This means that $X\cap T$ is a demand cut.
    In particular, since $\Gamma_T(X\cap T)=\Gamma_{T \cup S}(X)$,
    no $T$-path in $G[T\cup S]$ covers $X\cap T$.
   \end{proof}

   The following lemma presents a characterization of feasible solutions
   for the rooted augmentation problem.
    \begin{lemma}
     \label{lem.feasibility}
         Let $m \geq k$,
    and let $T$ be an $m$-dominating set of a graph $G=(V,E)$
    such that the connectivity between  $r \in T$ and each node in $T\setminus \{r\}$ is at least
     $k-1$ in $G[T]$.
    $S \subseteq V\setminus T$ is feasible for the rooted augmentation
    problem with $G$, $T$, and $r$
    if and only if every demand cut $X$ is covered by a
    $T$-path in $G[S \cup T]$.
    \end{lemma}
     \begin{proof}
      If there exists a demand cut $X$ covered by no $T$-path in $G[S\cup T]$,
      then, by Lemma~\ref{lem.Ucut},
      any $(T\cup S,r)$-cut $X$ in $G$ satisfies $|\Gamma_{T \cup S}(X)| \geq k$.
      The ``if'' part follows from this fact.

      Let us prove the ``only if'' part.
     Let $X$ be a demand cut,
      and $v \in X$.
     If $X$ is covered by 
     no $T$-path in $G[S \cup T]$,
     then each path connecting $v$ and $r$  in $G[S \cup T]$ 
      includes some node in $\Gamma_T(X)$. Since $|\Gamma_T(X)|\leq k-1$,
     this implies that the connectivity between $v$ and $r$ is at most
     $k-1$ in $G[S\cup T]$.  Therefore, $S$ is not feasible.
     \end{proof}

  For $S \subseteq V \setminus T$,
  we let $\Dfam(r,S)$ denote
  the family of all demand cuts
   covered by no $T$-path in $G[T \cup S]$.
  Lemma~\ref{lem.feasibility} implies that $S$ is feasible if and only if $\Dfam(r,S)=\emptyset$.

The following lemma was used in previous studies.

\begin{lemma}[\cite{Shang:2007jg,WangKAGLZW13,Wang:2015}]\label{lem.path}
 Let $m \geq k$, and 
 let $T$ be an $m$-dominating set  of a $k$-connected graph $G=(V,E)$
 such that the connectivity between $r$ and each node in $T \setminus \{r\}$
 is at least $k-1$ in $G[T]$.
 For every demand cut $X$ of $G$ with respect to $T$
 (i.e., $X$ is a $(T,r)$-cut and $|\Gamma_T(X)|=k-1$ in $G$), $G$ includes a $T$-path that covers $X$
 and contains at most two inner nodes.
\end{lemma}

From these observations, we can consider
the following simple algorithm for the rooted augmentation
problem.
Initialize a solution $S$ to an empty set.
If $S$ is not feasible for the rooted augmentation problem,
there exists a demand cut $X \in \Dfam(r,S)$.
The algorithm 
chooses such a demand cut $X$, and adds to $S$
the inner nodes of a
$T$-path covering $X$ that is guaranteed by
Lemma~\ref{lem.path}.
The algorithm repeats this until $T\cup S$ becomes feasible.
In fact, this is almost same as the algorithms proposed in
\cite{Shang:2007jg,Wang:2015}
for $k\leq 3$ while \cite{Shang:2007jg,Wang:2015} consider
the augmentation problem instead of the rooted augmentation problem.
Every iteration of this algorithm adds at most two nodes to $S$.
Hence,
in order to obtain an approximation guarantee for this algorithm,
it is critical to analyze how many iterations are
required to ensure that $T \cup S$ is feasible.

We analyze the number of iterations for a general connectivity requirement $k$.
To do this, we make a slight modification to the algorithm.
We restrict the demand cut
$X$ that is chosen in each iteration, as follows.
Instead of an arbitrary demand cut in $\Dfam(r,S)$,
our algorithm
always chooses a minimal of such cuts.
This procedure is described in detail as Algorithm~\ref{alg.simple}.

 \begin{algorithm}
  \caption{Algorithm for the rooted augmentation problem}
  \label{alg.simple}
  \begin{algorithmic}
   \REQUIRE
   integers  $m,k$ with $m \geq k$,
   an undirected graph $G=(V,E)$,
   an $m$-dominating set $T \subseteq V$,
   and a root $r \in T$ such that
   $r$ and each node in $T \setminus \{r\}$ is
   $(k-1)$-connected in $G[T]$
   \ENSURE $S \subseteq V \setminus T$ such that
   $r$ and each node in $T \setminus \{r\}$ is $k$-connected in $G[T\cup S]$
   \STATE $S \longleftarrow \emptyset$
   \WHILE{$S$ is not feasible}
   \STATE $X \longleftarrow$ a minimal cut in $\Dfam(r,S)$
   \STATE $P \longleftarrow$ a minimum-length $T$-path that covers $X$
   \STATE add the inner nodes in $P$ to $S$
   \ENDWHILE
   \STATE output $S$
  \end{algorithmic}
 \end{algorithm}

 Let us explain that each iteration of Algorithm~\ref{alg.simple} can be done in polynomial time.
 A minimum-length $T$-path $P$ that covers $X$
 can be computed by a shortest path algorithm.
 For computing a minimal cut in $\Dfam(r,S)$,
 we first compute a minimal $(T\cup S, r)$-cut $X'$ that minimizes
 $|\Gamma_{T\cup S}(X')|$ in $G[T \cup S]$. This can be found by applying 
 a max-flow algorithm repeatedly to the graph $G[T\cup S]$ with unit
 node-capacities except on the sink and the source
 while setting the source to each node in $(T
 \cup S) \setminus \{r\}$ and the sink to $r$.
 Since $S$ is not feasible, $|\Gamma_{T \cup S}(X')|=k-1$ holds.
 Lemma~\ref{lem.Ucut} indicates $X' \cap T \in \Dfam(r,S)$.
 Indeed, $X' \cap T$ is a minimal cut in $\Dfam(r,S)$. To see this, 
 suppose that there exists $X'' \in \Dfam(r,S)$ with $X'' \subset X'
 \cap T$.
 Let $S'$ be the subset of $S$ that consists of nodes on
 $T$-paths  in the graph $G[S\cup T]$ having end nodes in $X''$. Then, since $X''$ is 
 covered by no $T$-path, $|\Gamma_{T\cup S}(X'' \cup S')|=k-1$, which
 contradicts the minimality of $X'$.

In the following theorem, we show that $O(|T|)$ iterations are sufficient to ensure that
Algorithm~\ref{alg.simple} computes a feasible solution.

 \begin{theorem}
  \label{thm.simple-alg}
 Algorithm~\ref{alg.simple} 
  outputs a solution after $2|T|-3$ iterations.
 \end{theorem}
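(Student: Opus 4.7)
The plan is to bound the number of iterations using two ingredients: monotonicity of the ``covered'' status of demand cuts, and a laminarity property forced by Algorithm~\ref{alg.simple}'s minimality rule.

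First I would note that once a demand cut is covered it remains covered, since enlarging $S$ preserves every existing $T$-path in $G[T\cup S]$. Hence each demand cut is selected at most once by the algorithm, and the total number of iterations equals the number of distinct demand cuts ever chosen.

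Next, I would establish the key structural claim: at every moment during the execution, the family $\Mfam$ of minimal uncovered demand cuts is laminar as a family of subsets of $T$. Take $X, Y \in \Mfam$ with $X \cap Y \neq \emptyset$ and neither containing the other. Lemma~\ref{lem.submodularity} (inequality~\eqref{eq.submodular}) gives
\[
  2(k-1) = |\Gamma_T(X)| + |\Gamma_T(Y)| \geq |\Gamma_T(X \cap Y)| + |\Gamma_T(X \cup Y)|,
\]
and since $T$ is $(k-1)$-connected every $T$-cut has $|\Gamma_T|\geq k-1$, so both right-hand terms must equal $k-1$; in particular $X \cap Y$ is a demand cut properly contained in $X$, which by minimality of $X$ is already covered by some $T$-path $P$ in $G[T\cup S]$. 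An analogous application of the negamodular inequality~\eqref{eq.negamodular} to $X \setminus Y^+$ and $Y \setminus X^+$ produces another covered demand cut properly inside $X$, and combining the covering $T$-paths with the structure of $\Gamma_T(X)$ should yield an actual $T$-path covering $X$, contradicting $X \in \Mfam$. A laminar family on the $|T|$-element ground set has at most $2|T|-3$ non-trivial members.

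The remaining, and hardest, step is to convert this instantaneous bound $|\Mfam|\leq 2|T|-3$ into a bound of $k(2|T|-3)$ on the total iteration count, since covering one element of $\Mfam$ may promote a previously non-minimal demand cut into $\Mfam$. My plan for this last step is to partition the sequence of cuts chosen by the algorithm into at most $k$ disjoint laminar subfamilies of subsets of $T$, using the following charging scheme: each chosen cut $X_i$ is tagged by which of the $k-1$ nodes of its boundary $\Gamma_T(X_i)$ (together with one endpoint in $X_i$) is consumed by the $T$-path of Lemma~\ref{lem.path}. Showing that the sets receiving a fixed tag form a laminar family on $T$ would bound their count by $2|T|-3$, and summing over the $k$ tags yields $k(2|T|-3)$. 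Proving the laminarity within each tag class—i.e., that two chosen cuts sharing a tag never properly overlap—is the crux and would rely on combining the minimality rule with the submodular and negamodular inequalities in the same style as the argument above.
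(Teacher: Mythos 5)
Your outline has the right ingredients (submodularity, laminarity, a factor-$k$ decomposition), but two steps that you flag as ``should yield'' and ``the crux'' are exactly where the real work lies, and as written they do not go through. First, in your instantaneous-laminarity claim the inequality $2(k-1)\ge|\Gamma_T(X\cap Y)|+|\Gamma_T(X\cup Y)|$ only forces both terms to equal $k-1$ if \emph{both} $X\cap Y$ and $X\cup Y$ are $T$-cuts; $X\cup Y$ need not be one (we may have $T\subseteq(X\cup Y)^+$), in which case $|\Gamma_T(X\cup Y)|$ is not bounded below by $k-1$ and you cannot conclude that $X\cap Y$ is a demand cut. The paper avoids this by fixing a root $r\in T$ and working only with cuts $X$ satisfying $r\notin X^+$ (replacing a chosen cut containing $r$ by its complement $T\setminus X^+$, which turns ``minimal'' into ``maximal''), so that $X\cup Y$ is guaranteed to be a $(T,r)$-cut. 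Moreover, your attempted contradiction --- that a $T$-path covering the strictly smaller covered cut $X\cap Y$ ``should yield'' a $T$-path covering $X$ --- does not follow: the far endpoint of that path may lie in $X\setminus Y^+$ or in $\Gamma(Y)$ rather than in $T\setminus X^+$. The paper instead adds an auxiliary edge for every existing $T$-path, so that ``uncovered demand cut'' becomes ``$|\Gamma_T|=k-1$ in the augmented graph,'' and then equality in submodularity directly shows $X\cap Y$ and $X\cup Y$ are themselves uncovered demand cuts, contradicting minimality or maximality.

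Second, the conversion to the $k(2|T|-3)$ bound is the heart of the theorem and is left as a plan. Your proposed charging scheme (tagging each chosen cut by a boundary node ``consumed'' by its covering path) is not shown to produce laminar classes, and it is not clear that it does. The paper's decomposition is different and requires no reference to the covering paths: fix $k$ distinct nodes $r_1,\dots,r_k\in T$ and put a chosen cut $X$ into the class $\Afam_{r_i}=\{X:\,r_i\notin\Gamma(X)\}$; since $|\Gamma_T(X)|=k-1<k$, every chosen cut lies in at least one class. Within a class, laminarity (after the complementation above, on the ground set $T\setminus\{r_i\}$) is proved by applying the minimal/maximal crossing lemma at the time the \emph{earlier} of two chosen cuts was selected, when both were still uncovered. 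You would need to either adopt this scheme or supply a genuine proof that your tag classes are laminar; without one the bound $k(2|T|-3)$ is not established.
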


Before presenting a proof of Theorem~\ref{thm.simple-alg},
let us give a consequence of Theorem~\ref{thm.simple-alg}.
Each iteration of
 Algorithm~\ref{alg.simple} adds at most two nodes to the solution.
 Hence, Theorem~\ref{thm.simple-alg} immediately implies
 that
 Algorithm~\ref{alg.simple} outputs a solution $S$ such that $|S|\leq
 2(2|T|-3)\leq 4|T|$.
 By the discussion above, this presents the following corollary.

   \begin{corollary}\label{cor.unweighted}
    There exists an $O(k5^k)$-approximation algorithm for 
    the unweighted $(k,m)$-CDS problem in unit disk graphs
    if $m \geq k$.
   \end{corollary}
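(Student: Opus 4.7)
My plan is to prove the corollary by induction on $k$, iteratively applying Algorithm~\ref{alg.simple} starting from an initial $m$-dominating set. For the base case, I would invoke the constant-factor approximation algorithm of Shang~et~al.~\cite{Shang:2007jg} to compute an $m$-dominating set $T_0$, which is trivially a $(0,m)$-CDS; let $\alpha_0 = O(1)$ denote its approximation ratio.

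For the inductive step, suppose we have already produced a $(k'-1,m)$-CDS $T_{k'-1}$ satisfying $|T_{k'-1}| \leq \alpha_{k'-1}\,\opt_{(k'-1,m)}$, where $\opt_{(k',m)}$ denotes the minimum size of a $(k',m)$-CDS. Since $m \geq k \geq k'$, I would apply Algorithm~\ref{alg.simple} to $T_{k'-1}$ with connectivity target $k'$ to obtain $S_{k'}$ such that $T_{k'} := T_{k'-1} \cup S_{k'}$ is a $(k',m)$-CDS. By Theorem~\ref{thm.simple-alg}, this run terminates within $k'(2|T_{k'-1}|-3)$ iterations, each of which adds at most two nodes to the solution, so $|S_{k'}| \leq 4k'|T_{k'-1}|$ and hence $|T_{k'}| \leq (4k'+1)|T_{k'-1}| \leq 5k'|T_{k'-1}|$. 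Because every $(k',m)$-CDS is also a $(k'-1,m)$-CDS, monotonicity gives $\opt_{(k'-1,m)} \leq \opt_{(k',m)}$, and combining these bounds yields $|T_{k'}| \leq 5k'\alpha_{k'-1}\,\opt_{(k',m)}$. Consequently, the recurrence $\alpha_{k'} \leq 5k'\alpha_{k'-1}$ holds.

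Unrolling this recurrence gives $\alpha_k \leq \alpha_0 \cdot \prod_{k'=1}^{k} 5k' = \alpha_0 \cdot 5^k k! = O(5^k k!)$, matching the bound claimed in the corollary. The heavy technical lifting has already been carried out in Theorem~\ref{thm.simple-alg}; the corollary itself is little more than routine inductive bookkeeping together with the monotonicity observation $\opt_{(k'-1,m)} \leq \opt_{(k',m)}$. The only points I would still need to check are that Shang~et~al.'s algorithm indeed provides a constant-factor approximation to the $m$-dominating set problem in unit disk graphs (supplying the base case), and that Algorithm~\ref{alg.simple} is well-defined at $k'=1$, where demand cuts degenerate into connected components of $G[T_0]$ that have not yet been joined by previously chosen $T$-paths; both are essentially immediate given the preliminaries already developed.
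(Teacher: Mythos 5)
Your proposal is correct and follows essentially the same route as the paper: compute a constant-factor $m$-dominating set via Shang et al., then repeatedly apply Algorithm~\ref{alg.simple}, using Theorem~\ref{thm.simple-alg} to bound each augmentation by $2k'(2|T_{k'-1}|-3)$ added nodes, which yields the recurrence $\alpha_{k'}\leq 5k'\alpha_{k'-1}$ and hence $O(5^k k!)$ after unrolling. The monotonicity observation $\opt_{(k'-1,m)}\leq\opt_{(k',m)}$ and the degenerate $k'=1$ case you flag are exactly the implicit bookkeeping the paper relies on as well.
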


   We note that our algorithm for the rooted augmentation problem
   does not rely on any property specific to unit disk graphs, and so
  the result in Corollary~\ref{cor.unweighted}
   can be extended to any graph class that admits a 
   constant-approximation algorithm for
   finding a minimum $m$-dominating set.
   One such example
   is the class of 
   bounded clique- and tree-width graphs~\cite{CicaleseCGMV14}.

 Now, we prove Theorem~\ref{thm.simple-alg}.
 \begin{lemma}
  \label{lem.crossing}
  Let $r \in T$ and $S \subseteq V \setminus T$.
  Let $X,Y \in \Dfam(r,S)$. If $X$ is minimal in $\Dfam(r,S)$, then
  $X\cap Y=\emptyset$ or $X \subseteq Y$ holds.
  \end{lemma}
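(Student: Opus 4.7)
The approach is a standard uncrossing argument based on the submodular inequality \eqref{eq.submodular}. Assuming $X \cap Y \neq \emptyset$, the plan is to show that both $X \cap Y$ and $X \cup Y$ lie in $\Dfam(r, S)$; minimality of $X$ then forces $X \cap Y = X$ (giving $X \subseteq Y$), and maximality forces $X \cup Y = X$ (giving $Y \subseteq X$), thereby handling both halves of the lemma in a single uncrossing.

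First I would set up the relevant combinatorics. Since $r \notin X^+$ and $r \notin Y^+$, it follows easily that $r \notin (X \cap Y)^+$ and $r \notin (X \cup Y)^+$, so both $X \cap Y$ (nonempty by hypothesis) and $X \cup Y$ are $(T,r)$-cuts. The $(k-1)$-connectivity of $T$ then gives $|\Gamma_T(X \cap Y)|, |\Gamma_T(X \cup Y)| \geq k-1$, and combining this with $|\Gamma_T(X)| + |\Gamma_T(Y)| = 2(k-1)$ and \eqref{eq.submodular} forces equality throughout. In particular, both $X \cap Y$ and $X \cup Y$ are demand cuts. The decisive by-product is that equality in \eqref{eq.submodular} is really a \emph{per-node} equality: writing $c_L(v) := [v \in \Gamma_T(X)] + [v \in \Gamma_T(Y)]$ and $c_R(v) := [v \in \Gamma_T(X \cap Y)] + [v \in \Gamma_T(X \cup Y)]$, the proof of \eqref{eq.submodular} actually shows $c_L(v) \geq c_R(v)$ for every $v \in T$, so the sum equality forces $c_L(v) = c_R(v)$ pointwise.

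Next I would argue that neither $X \cap Y$ nor $X \cup Y$ is covered by a $T$-path in $G[T \cup S]$. The case of $X \cup Y$ is easy: a covering $T$-path would have one endpoint in $X$ (say) and the other in $T \setminus (X \cup Y)^+ \subseteq T \setminus X^+$, so it would also cover $X$, contradicting $X \in \Dfam(r, S)$. The case of $X \cap Y$ is the main obstacle. Suppose a $T$-path $P$ covered $X \cap Y$ with endpoints $u \in X \cap Y$ and $v \in T \setminus (X \cap Y)^+$; then $v \in X^+$ (else $P$ covers $X$) and $v \in Y^+$ (else $P$ covers $Y$), so $v \in (X \cup \Gamma_T(X)) \cap (Y \cup \Gamma_T(Y))$. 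Since $v \notin X \cap Y$, a short case analysis on which of the four pieces $X \setminus Y, Y \setminus X, \Gamma_T(X), \Gamma_T(Y)$ contain $v$ shows that in every remaining scenario $c_L(v) > c_R(v)$, contradicting the per-node equality established above. I expect the bookkeeping in this case split (together with the observation that $v \notin \Gamma(X \cap Y)$) to be the subtlest part of the argument.

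Once $X \cap Y, X \cup Y \in \Dfam(r, S)$ are in hand, the conclusion is immediate: from $X \cap Y \subseteq X \subseteq X \cup Y$, minimality of $X$ in $\Dfam(r, S)$ forces $X \cap Y = X$, i.e.\ $X \subseteq Y$, and maximality of $X$ forces $X \cup Y = X$, i.e.\ $Y \subseteq X$.
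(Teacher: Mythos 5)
Your proof is correct, and it follows the same overall uncrossing strategy as the paper: show $X\cap Y$ and $X\cup Y$ are $(T,r)$-cuts, apply \eqref{eq.submodular}, force equality, and conclude that both sets lie in $\Dfam(r,S)$, contradicting minimality or maximality. The one place where you diverge is in how you establish that $X\cap Y$ (and $X\cup Y$) is not covered by any $T$-path in $G[T\cup S]$. The paper sidesteps this entirely by forming an auxiliary graph $G'$ on $T$ in which each $T$-path of $G[T\cup S]$ is replaced by an edge between its endpoints; since $X$ and $Y$ are uncovered demand cuts, $|\Gamma_T(X)|=|\Gamma_T(Y)|=k-1$ still holds in $G'$, and applying \eqref{eq.submodular} once in $G'$ gives $|\Gamma_T(X\cap Y)|=|\Gamma_T(X\cup Y)|=k-1$ there, which simultaneously encodes ``demand cut'' and ``uncovered'' (a covering path would contribute an extra $G'$-neighbor). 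You instead work in $G$ itself, get the demand-cut property from equality in \eqref{eq.submodular}, and then extract non-coverage of $X\cap Y$ from the per-node tightness of the submodular inequality via a case split on the location of the far endpoint $v$; your three cases ($v\in X\setminus Y$, $v\in Y\setminus X$, $v\notin X\cup Y$) each do yield $c_L(v)>c_R(v)$, so the argument closes. Both routes are sound; the paper's auxiliary-graph device is shorter and avoids the bookkeeping, while your version makes explicit exactly which node witnesses the strict drop when a covering path exists.
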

\begin{proof}
 Suppose that
 some pair of $X$ and $Y$ violates this claim.
  Then, $X\cap Y \neq \emptyset$ holds.
  The minimality of $X$ implies that $Y \subset X$ does not
 hold.
  Hence, 
 $\emptyset \neq X \cap Y \subset X \subset X \cup Y$ holds.
 Also, $r \not\in (X \cup Y)^+$ follows from
 $r \not \in X^+$ and $r \not \in Y^+$, and
 $r \not\in (X \cap Y)^+$ follows, because
 $(X \cap Y)^+ \subseteq (X\cup Y)^+$.
 Therefore, both $X \cap Y$ and $X\cup Y$ are $(T,r)$-cuts.

 For each $T$-path $P$ in $G[T\cup S]$,
 we add an edge joining two end nodes of $P$ to $G[T]$.
 Let $G'$ denote the graph with the node set $T$ obtained by this operation. 
 Consider inequality \eqref{eq.submodular}, where $\Gamma_T$ is
 defined with respect to the graph $G'$.
 The left-hand side of  \eqref{eq.submodular} is exactly $2(k-1)$,
 because $X, Y \in \Dfam(r,S)$,
 and the right-hand side is at least $2(k-1)$,
 because $X\cap Y$ and $X \cup Y$ are $(T,r)$-cuts.
 Hence, the inequality holds with equality, and
 $|\Gamma_T(X \cap Y)| =|\Gamma_T(X \cup Y)| =k-1$.
 This means that both $X \cap Y$ and $X \cup Y$ belong to $\Dfam(S,r)$.
 This contradicts the minimality  of $X$.
\end{proof}

Define $\Afam$ as the family of demand cuts
chosen in the while loop of Algorithm~\ref{alg.simple}.
Theorem~\ref{thm.simple-alg} is implied by the following lemma.

 \begin{lemma}
  \label{lem.laminarity}
 $|\Afam| \leq 2|T|-3$.
 \end{lemma}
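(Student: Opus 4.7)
The plan is to associate to each $X \in \Afam_r$ a subset $\hat{X} \subseteq T \setminus \{r\}$, and to show that $\widehat{\Afam}_r := \{\hat{X} : X \in \Afam_r\}$ is a laminar family of distinct nonempty subsets of $T \setminus \{r\}$. The standard bound $2n-1$ for such a family with $n = |T|-1$ then yields $|\Afam_r| = |\widehat{\Afam}_r| \leq 2|T| - 3$.

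Define $\hat{X} = X$ if $r \notin X^+$, and $\hat{X} = T \setminus X^+$ if $r \in X$; the condition $r \notin \Gamma(X)$ defining $\Afam_r$ makes these cases exhaustive. Because $T$ is $(k-1)$-connected and $X$ is a demand cut, the ``opposite side'' $T \setminus X^+$ is itself a demand cut with the same boundary $\Gamma_T(X)$, and it is covered by exactly the same $T$-paths in $G[T \cup S]$ as $X$. Consequently, if $X = X_i$ is the cut chosen at iteration $i$, then $\hat{X}_i$ is a nonempty uncovered demand cut satisfying $r \notin \hat{X}_i^+$; equivalently, $\hat{X}_i \in \Dfam(r, S_{i-1})$. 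The $\hat{X}_i$'s are all distinct: the $T$-path inserted at iteration $i$ covers both $X_i$ and $T \setminus X_i^+$, so neither side can reappear in the pool of uncovered demand cuts at any later iteration.

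The heart of the argument, and its main obstacle, is to verify that each $\hat{X}_i$ is either minimal or maximal in $\Dfam(r, S_{i-1})$; Lemma~\ref{lem.crossing} will then produce the laminar relation between $\hat{X}_i$ and every later $\hat{X}_j$ (still in $\Dfam(r, S_{i-1})$, since the family of uncovered demand cuts shrinks monotonically as $S$ grows). The case $r \notin X_i^+$ is easy: the algorithmic minimality of $X_i$ among all uncovered demand cuts restricts immediately to minimality of $\hat{X}_i = X_i$ in the subfamily $\Dfam(r, S_{i-1})$. The delicate case is $r \in X_i$, where $\hat{X}_i = T \setminus X_i^+$ and I want maximality. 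Suppose for contradiction that some $Y' \in \Dfam(r, S_{i-1})$ properly contains $\hat{X}_i$. Then $Y := T \setminus (Y')^+$ is a nonempty uncovered demand cut satisfying $Y \subseteq X_i$ (from $\hat{X}_i^+ \subseteq (Y')^+$), and the inclusion is strict by a short cardinality argument using $|\Gamma_T(\hat{X}_i)| = |\Gamma_T(Y')| = k-1$ together with $\hat{X}_i \subsetneq Y'$. Such a $Y$ contradicts the minimality of $X_i$ as chosen by the algorithm. Combining both cases with Lemma~\ref{lem.crossing} establishes the laminarity of $\widehat{\Afam}_r$ and thereby the desired bound.
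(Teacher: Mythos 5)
Your proof is correct and follows essentially the same route as the paper: map each $X$ to $X$ or $T\setminus X^+$ depending on the position of $r$, show the resulting family on $T\setminus\{r\}$ is laminar via the minimality/maximality of the chosen cut in $\Dfam(r,S)$ together with Lemma~\ref{lem.crossing}, and apply the $2|T|-3$ bound. You additionally spell out two points the paper leaves implicit (the injectivity of $X\mapsto\hat X$ and the reduction of maximality of $T\setminus X_i^+$ to the algorithmic minimality of $X_i$), which is a welcome refinement but not a different argument.
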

 \begin{proof}
  We prove that $\Afam$ is a laminar family
  on $T \setminus \{r\}$.
  The lemma follows from this, because the size of a laminar family
  on the set of cardinality $|T|-1$ is at most $2|T|-3$.
  
Suppose that there exist $X,Y \in \Afam$ such that
 $X\cap Y \neq \emptyset$, $X \setminus Y \neq \emptyset$, and $Y
  \setminus X \neq \emptyset$.
  We may assume without loss of generality that
  $X$ is chosen in an earlier iteration than that in which $Y$ is chosen.
  Let $S$ denote the subset at the beginning of the iteration during which $X$ is chosen.
  Then
  both $X$ and $Y$ belong to $\Dfam(r,S)$,
  and $X$ is minimal in $\Dfam(r,S)$.
  However, Lemma~\ref{lem.crossing} shows that
  $X \cap Y = \emptyset$.
    This contradicts the definitions of $X$ and $Y$.
 \end{proof}

\section{Primal-dual algorithm for the weighted problem}
\label{sec.primal-dual}

In this section, we present an $O(k^2 \log k)$-approximation algorithm
for the weighted $(k,m)$-CDS problem.
Our algorithm for this problem is also based on a subroutine that
solves the augmentation problem. We show that there is a
constant-approximation algorithm for the augmentation problem with general
node weights.
This algorithm is based on the primal-dual method, which is a technique
for computing an approximate solution from a linear programming (LP) relaxation of the problem.
Before introducing the primal-dual algorithm, we consider the
weighted $m$-dominating set problem, which 
demands a minimum weight $m$-dominating set;
we prove that the problem admits a constant-approximation algorithm.

\subsection{Approximation algorithm for the weighted  $m$-dominating set problem}
\label{subs.dominating}

Our algorithm reduces the weighted $m$-dominating set problem to the following
geometric problem.

 \begin{definition}[Disk multicover problem]
 We are given a set $P$ of points and a set $D$ of disks on the
 Euclidean plane,
  a demand $d(p)$ for each point $p \in P$,
 and a nonnegative weight $w(i)$ for each disk $i \in D$.
  A subset $D'$ of $D$
is called a disk multicover if
 each point $p \in P$ is contained in at least $d(p)$ disks in $D'$.
  The problem requires finding a disk multicover $D'$ that minimizes the weight $\sum_{i \in D'}w(i)$.
  \end{definition}

When $d(p)=1$ for all $p \in P$, this is called the
\emph{disk cover problem}.
We write $p \in i$ if a point $p$ is included in a disk $i$.

Bansal and Pruhs~\cite{Bansal:2012ik} presented a constant-approximation algorithm for the disk
multicover problem.
Their algorithm is an LP-rounding algorithm. That is to say, their
algorithm first solves the following LP relaxation of the problem:
\begin{equation}
 \label{eq.lp-diskcover}
 \begin{array}{ll}
  \text{minimize} & \sum_{i \in D} w(i) x(i)\\
  \text{subject to} & \sum_{i \in D: p \in i} x(i) \geq d(p), \ \forall  p \in P,\\
  &  0 \leq x(i) \leq 1, \  \forall i \in D,
 \end{array}
\end{equation}
then it computes a disk multicover $D'$ 
that satisfies $\sum_{i \in D'}w(i) = O(1) \cdot \sum_{i \in D}
w(i)x(i)$ for an optimal solution $x$ to \eqref{eq.lp-diskcover}.

When $m=1$, the problem of finding a minimum weight $m$-dominating set
in a unit disk graph
can be reduced to the disk cover problem, as follows.
Define $D$ as the set of unit disks corresponding to the nodes in the
unit disk graph, and define $P$ as the set of the centers of the disks.
The weight $w(i)$ of a disk $i \in D$ is defined as the weight of the
corresponding node in the graph.
For each point $p$,
a disk cover in this instance includes at least one disk
that contains $p$. 
This means that the set of nodes corresponding to the disks in the 
disk cover
is a $1$-dominating set of the graph.

From the weighted $m$-dominating set problem with $m \geq 2$,
we can similarly define an instance of the disk multicover problem;
$D$, $P$, and $w$ are defined in the same way,
and the demand $d(i)$ of each disk $i \in D$ is defined as $m$.
By solving this instance, we can obtain
an $m$-dominating set in the unit disk graph.
However, the minimum weight of disk multicovers in the obtained instance
is possibly too large, compared with the minimum weight of the $m$-dominating sets.
To see this, let $D'$ be a disk multicover in the obtained instance of the
disk multicover problem.
The constraint in the disk multicover
problem demands that each point $i \in D'$ is included in at least
$d(i)$
disks in $D'$. On the other hand, in the weighted $m$-dominating set problem,
if a solution includes
a node $i$,
it is feasible even if
it does not contain $d(i)$ neighbors of $i$.
In other words, the constraint of the disk multicover problem
in the constructed instance is stronger than the one demanded in the original
instance of the weighted $m$-dominating set problem.
Accordingly, there does not seem to exist a straightforward reduction
from the weighted $m$-dominating set problem to the disk multicover problem.

Nevertheless, we show that
the weighted $m$-dominating set problem in a unit disk graph
can be approximated via an algorithm for 
the disk multicover problem.
Our algorithm first solves the following LP relaxation of the
weighted $m$-dominating set problem:
\begin{equation}
 \label{eq.lp-dominatingset}
 \begin{array}{lll}
  \text{minimize} & \sum_{v \in V} w(v) x(v)\\
  \text{subject to} & \sum_{u \in \Gamma(v)\setminus S} x(u) \geq (m-|S|)
   (1-x(v)), & \forall v \in V, \forall S \subseteq \Gamma(v),\\
  &  0 \leq x(v) \leq 1, & \forall v \in V.
 \end{array}
\end{equation}

Although \eqref{eq.lp-dominatingset}  has an exponential number
of constraints, the separation can be done in polynomial time.
Namely, given $x$, we can judge whether $x$ is a feasible solution for
\eqref{eq.lp-dominatingset}, as follows.
For each $v$, sort the neighbors $u_1,\ldots,u_i \in \Gamma(v)$ so that
$x(u_1) \leq \cdots \leq x(u_i)$.
If $\sum_{j=1}^{i-m'} x(u_{j}) \geq (m-m')(1-x(v))$  for each
$m'=0,\ldots,m$,
then the constraint defined from $v$ and all $S \subseteq \Gamma(v)$ is
satisfied by $x$.
If $\sum_{j=1}^{i-m'} x(u_{j}) \geq (m-m')(1-x(v))$ does not hold for some
$m'=0,\ldots,m$,
then the constraint defined from $v$ and
$S:=\{u_{i-m'+1},\ldots,u_{i}\}$
($S:= \emptyset$ when $m'=0$)
is
violated by $x$.
Hence, the separation can be done by checking whether the $m+1$
inequalities hold for each node $v$.
Therefore, the ellipsoid method can be used to solve \eqref{eq.lp-dominatingset} in polynomial time.

Let $x^*$ denote an optimal solution for \eqref{eq.lp-dominatingset}.
We define $U$ as $\{v \in V \colon x^*(v)\leq 1/2 \}$.
Our algorithm invokes
an algorithm for the disk multicover problem after the input is set as follows.
$D$ is defined as
the set of disks corresponding to the nodes in $U$,
and $P$ is defined as the set of the centers of the disks in $D$.
The demand $d(p)$ of the point $p \in P$
corresponding to a node $u \in U$
is defined as $m - |\Gamma(u) \setminus U|$.
We solve the obtained instance of the disk multicover problem
by using an LP rounding algorithm based on \eqref{eq.lp-diskcover}, such as
the algorithm of Bansal and Pruhs~\cite{Bansal:2012ik}.
Let $D'$ be the set of nodes corresponding to the disks in the obtained approximate solution.
Our algorithm outputs $D' \cup (V \setminus U)$ as an approximate
solution for the weighted $m$-dominating set problem.

\begin{theorem}
 There exists a $2\alpha$-approximation algorithm for the weighted $m$-dominating set problem in a
 unit disk graph if an algorithm for the disk multicover problem
 computes a solution of weight at most
 $\alpha \cdot \sum_{i \in D} w(i)x(i)$ for an optimal solution $x$ to
 \eqref{eq.lp-diskcover}.
\end{theorem}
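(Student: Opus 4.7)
The plan is to verify two things: first, that $D'\cup(V\setminus U)$ is indeed an $m$-dominating set, and second, that its weight is within a constant factor of the LP optimum for \eqref{eq.lp-dominatingset}, which in turn lower-bounds $\opt$.

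For feasibility, consider any node $v\in V\setminus(D'\cup(V\setminus U))=U\setminus D'$. It already has $|\Gamma(v)\setminus U|$ neighbors in $V\setminus U$, so it suffices to exhibit $m-|\Gamma(v)\setminus U|$ further neighbors of $v$ in $D'$. The point $p$ corresponding to $v$ has demand $d(p)=m-|\Gamma(v)\setminus U|$, and the disks in $D$ containing $p$ correspond exactly to $\{v\}\cup(\Gamma(v)\cap U)$. Since $v\notin D'$, the at least $d(p)$ disks of $D'$ covering $p$ all correspond to nodes in $\Gamma(v)\cap U\cap D'\subseteq\Gamma(v)\cap D'$, which is the needed set of extra neighbors. (Points with $d(p)\le 0$ impose no constraint.) The LP \eqref{eq.lp-dominatingset} is a valid relaxation since the integral indicator of any $m$-dominating set satisfies every constraint, so $\mathrm{LP}^*\le\opt$.

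For the weight bound, split the output into $V\setminus U$ and $D'$. Every $v\in V\setminus U$ has $x^*(v)>1/2$, giving $\sum_{v\in V\setminus U}w(v)\le 2\,\mathrm{LP}^*$. For $D'$, the key step is to build a feasible solution $y$ of the disk multicover LP \eqref{eq.lp-diskcover} from $x^*$ by setting $y(u)=2x^*(u)$ for $u\in U$; since $x^*(u)\le 1/2$ on $U$, we have $y\in[0,1]$. To check the covering constraint at a point $p$ corresponding to $u\in U$, apply the constraint of \eqref{eq.lp-dominatingset} for $v=u$ and $S=\Gamma(u)\setminus U$:
\begin{equation*}
\sum_{u'\in\Gamma(u)\cap U}x^*(u')\;\ge\;(m-|\Gamma(u)\setminus U|)(1-x^*(u))\;\ge\;\tfrac{1}{2}\,d(p),
\end{equation*}
using $x^*(u)\le 1/2$. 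Multiplying by $2$ yields $\sum_{u'\in\Gamma(u)\cap U}y(u')\ge d(p)$, which implies the LP constraint at $p$ a fortiori. Hence the disk multicover LP optimum is at most $\sum_{v\in U}w(v)y(v)=2\sum_{v\in U}w(v)x^*(v)\le 2\,\mathrm{LP}^*$, and the hypothesized $O(1)$-approximation for the disk multicover problem returns $D'$ with $\sum_{i\in D'}w(i)=O(1)\cdot\mathrm{LP}^*$.

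Combining the two bounds gives $w(D'\cup(V\setminus U))=O(1)\cdot\mathrm{LP}^*\le O(1)\cdot\opt$. The step I expect to require most care is the translation of a covering constraint from \eqref{eq.lp-dominatingset} into one for \eqref{eq.lp-diskcover}: one must pick the right subset $S=\Gamma(u)\setminus U$ to match the demand $d(p)=m-|\Gamma(u)\setminus U|$, and then exploit the threshold $x^*(u)\le 1/2$ on $U$ to absorb the factor $(1-x^*(u))$; the scaling $y=2x^*$ is what simultaneously preserves feasibility and keeps the cost blow-up constant.
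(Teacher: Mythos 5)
Your proof is correct and follows essentially the same route as the paper: scale $x^*$ restricted to $U$ by a factor of $2$ to obtain a feasible fractional disk multicover, invoke the constraint of \eqref{eq.lp-dominatingset} with $S=\Gamma(u)\setminus U$ together with the threshold $x^*(u)\le 1/2$, bound $V\setminus U$ separately, and verify domination of each $u\in U\setminus D'$. Your feasibility check is in fact slightly more careful than the paper's, since you explicitly note that the disk of $u$ itself contains $p_u$ but does not count because $u\notin D'$.
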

\begin{proof}
 We prove that our algorithm given above is a $2\alpha$-approximation algorithm.
 Let $\opt$ denote the minimum weight of the $m$-dominating sets,
 and let $x^*$ denote an optimal solution for \eqref{eq.lp-dominatingset}. Then,
 $\sum_{v \in V}w(v)x^*(v) \leq \opt$ holds
 because
 \eqref{eq.lp-dominatingset} relaxes 
  the weighted $m$-dominating set problem 
 (i.e., the incidence vector of each $m$-dominating set is a feasible solution to
 \eqref{eq.lp-dominatingset}).

For each $v\in U$, let $p_v$ and $i_v$ respectively denote the point in $P$ and the
 disk in $D$ corresponding to $v$.
 Define $\bar{x} \in [0,1]^{D}$ by $\bar{x}(i_v)=2x^*(v)$ for each $v \in
 U$.
 Then, for each $v \in U$,
 \begin{align*}
 \sum_{i \in D: p_v\in i}\bar{x}(i)
  &= 2 \sum_{u \in \Gamma_U(v) \cup \{v\}}x^*(u)\\
  &\geq 2 (m-|\Gamma(v)\setminus U|)(1-x^*(v))\\
  &\geq m-|\Gamma(v)\setminus U|\\
  &=d(p_v)
 \end{align*}
 holds, where the first inequality follows from the constraint
 of \eqref{eq.lp-dominatingset}, and the second inequality follows from
 $x^*(v) \leq 1/2$.
Hence, $\bar{x}$ is a feasible solution to \eqref{eq.lp-diskcover}.
 The algorithm for the disk multicover problem computes a solution $D'
 \subseteq D$ such that $\sum_{i \in D'}w(i) \leq \alpha \sum_{i \in D}w(i)\bar{x}(i)
 \leq 2\alpha\sum_{v \in U}w(v)x^*(v)$ for some constant $\alpha$.
 On the other hand, $\sum_{i \in V \setminus U}w(i) <
 2\sum_{v \in V\setminus U}w(v)x^*(v)$, because $x^*(v) > 1/2$ for
 each $v \in V\setminus U$.
 Therefore, $\sum_{v \in D' \cup (V\setminus U)}w(v)
 \leq 2 \alpha\sum_{v \in V}w(v)x^*(v) \leq 2\alpha \opt$.

 Let $u$ be a node that is not contained in $D' \cup (V\setminus U)$.
 Then, $u$ has $d(p_u)=m-|\Gamma(u)\setminus U|$ neighbors in $D'$
 and $|\Gamma(u)\setminus U|$ neighbors in $(V\setminus U)$.
 Therefore, $D' \cup (V\setminus U)$ is an $m$-dominating set.
 \end{proof}

 A drawback to our algorithm is that it requires the ellipsoid method,
 which tends to be slow in practice.
 When $m \in \{k,k+1\}$,
 this can be avoided by using the above-mentioned straightforward reduction to the disk multicover problem.
 Recall that the reduction does not work in general
 because
 a node in an $m$-dominating set $S$ may not have $m$ neighbors in $S$.
 However, when $S$ is $k$-connected for some $k \geq m-1$,
 each node $v \in S$ has $m-1$ neighbors in $S$. Note that $v$ is not
 counted as a neighbor of $v$.
 Hence, the minimum weight of the disk multicovers can be bounded by the
 minimum weight of the $(k,m)$-CDSs,
 and the straightforward reduction gives an $m$-dominating set
 that has a weight within a constant factor of the minimum weight of
 the $(k,m)$-CDSs.

 \subsection{Algorithm for the augmentation problem}
 \label{ssec.augmentation}
 
First, let us give an overview of our algorithm for the augmentation
problem.
When the connectivity requirement $k$ is equal to one, 
the augmentation problem is known as the \emph{node-weighted Steiner
tree problem}.
For general graphs, it is hard to approximate this problem within a factor of $o(\log n)$,
because it extends the set cover problem~\cite{KleinR95}.
However, in unit disk graphs, there is a constant-approximation
algorithm for the node-weighted Steiner tree problem.
Zou~et~al.~\cite{ZouLGW09} proved the existence of such an algorithm from the 
fact that
any unit disk graph has a spanning tree of maximum degree five.
This property of unit disk graphs is well known, and it can be shown by
using the following observation: if there is a node $v$ of degree more than
five in a spanning tree, then by Lemma~\ref{lem.kissingnum},
there is an edge $uu'$ that joins two neighbors $u$ and $u'$ of $v$.
Replacing the edge $vu$ by another edge $uu'$ transforms the spanning tree into
another spanning tree in which the degree of $v$ is decreased by one
(to ensure the existence of a spanning tree of maximum degree five, we must consider the degree of $u'$, because it is increased by the operation).
This approach cannot be directly extended to the general connectivity
requirement $k$, because this operation does not preserve the connectivity
of a graph.
To see this, consider the graph on seven nodes $u,
v_1,\ldots,v_6$ such that
$v_1\ldots,v_6$ form a cycle of length six, and $u$ is adjacent to each of
$v_1,\ldots,v_6$.
This graph is 3-connected, and the degree of $u$ is six. To decrease the
degree of $u$, replace one edge $uv_i$ by another edge $v_{i-1}v_i$, and 
then $v_i$ will have only two neighbors; hence, the
connectivity of the graph has been decreased to two.

Nevertheless, we will show that Lemma~\ref{lem.kissingnum} can be used to show 
that the augmentation problem admits a better approximation algorithm for unit disk graphs
than it does for general graphs. We will use the lemma in the framework
of the primal-dual method,
which has been applied successfully to many
network design problems~\cite{goemans1997primal}.
Our algorithm repeats growing several dual
variables simultaneously in an LP relaxation.
This approach has been used in the augmentation
problem with node weights~\cite{ChekuriEV12,Fukunaga15},
but 
its approximation factor depends on the number of nodes.
This is because the approximation factor is decided by the number of
dual variables that
are grown simultaneously in a single constraint.
In our LP relaxation of the augmentation problem, each dual variable
corresponds to
a demand cut, and each constraint corresponds to
a node in the given graph.
Since this number cannot be bounded, the primal-dual
method does not achieve a good approximation factor for
general graphs, but we will show that this number can be bounded in unit disk graphs, due to Lemma~\ref{lem.kissingnum}.

Let us explain the detail of our algorithm.
Consider an instance of the augmentation problem with a graph $G=(V,E)$
and an $(m,k-1)$-CDS $T$ of $G$.
Recall that $S \subseteq V \setminus T$ is defined to be feasible for the augmentation
problem
if $G[T \cup S]$ is $k$-connected.
Indeed,
as did in Section~\ref{sec.simple},
from the assumptions that $T$ is $m$-dominating and $m\geq k$,
we can see that $S$ is feasible if and only if
$T$ is $k$-connected in
$G[T\cup S]$.
Since $T$ is $(k-1)$-connected,
each Steiner $T$-cut $X$ satisfies $|\Gamma_T(X)| \geq k-1$ in $G$.
By the Menger's theorem, $S\subseteq V \setminus T$ is feasible for the augmentation problem
if and only if $S \cap \Gamma(X) \neq \emptyset$ for 
each Steiner $T$-cut $X$ with $|\Gamma_T(X)| = k-1$.

We choose a root node $r \in T$, and consider
the problem of finding a minimum weight node set
$S \subseteq V \setminus T$ such that
$S \cap \Gamma(X)\neq \emptyset$ holds for 
every Steiner $(T,r)$-cut $X$ with $|\Gamma_T(X)|=k-1$.
By solving this problem for different $k$ roots and
outputting the union of the obtained solutions,
we can solve the augmentation problem.

 For the remainder of this subsection, we fix a root node $r \in T$.
We say that a Steiner $(T,r)$-cut $X$ is a \emph{demand cut}
if $|\Gamma_T(X)|=k-1$.
 (Note that this is slightly different from the definition in
 Section~\ref{sec.simple}).
We denote by $\Dfam$ the family of all demand cuts.
 Observe that $S$ is a feasible solution for the current problem defined
 from $r$
 if and only if $\Gamma(X)\cap S\neq \emptyset$ for each demand cut $X$.
An LP relaxation of the problem can be formulated as follows:
\begin{equation}
 \label{eq.primal-lp}
 \begin{array}{lll}
  \text{minimize} & \sum_{v \in V \setminus T} w(v) x(v)\\
  \text{subject to} & \sum_{v \in \Gamma(X)\setminus T} x(v) \geq 1, 
   & \forall X \in \Dfam,\\
  &  x(v) \geq 0,  & \forall v \in V\setminus T.
 \end{array}
\end{equation}
The dual of this LP is 
\begin{equation}
 \label{eq.dual-lp}
 \begin{array}{lll}
  \text{maximize} & \sum_{X \in \Dfam} y(X)\\
  \text{subject to} & \sum_{X \in \Dfam: v \in \Gamma(X)} y(X) \leq w(v),
   & \forall v\in V \setminus T,\\
  &  y(X) \geq 0, &  \forall X \in \Dfam.
 \end{array}
\end{equation}
We say that
a node $v \in V \setminus T$ 
\emph{covers} a demand cut $X$
if $v \in \Gamma(X)$,
and a node set $S$
covers $X$ if there exists a node $v \in S$ that covers $X$.

A subfamily $\Ffam$ of $\Dfam$ is called \emph{uncrossable} when 
any $X,Y \in \Ffam$
satisfy $X \cap Y, X\cup Y \in \Ffam$ or
$X \setminus Y^+, Y \setminus X^+\in
\Ffam$.
$\Ffam$
is called \emph{$T$-intersecting}
 if
any $X,Y \in \Ffam$
with $X\cap Y \cap T \neq \emptyset$ satisfy $X \cap Y,X\cup Y \in
\Ffam$.
We will show that the augmentation problem admits  a
constant-approximation algorithm when $\Dfam$ is $T$-intersecting uncrossable.

In general, the family $\Dfam$ is not $T$-intersecting uncrossable.
If $\Dfam$ is not $T$-intersecting uncrossable, 
we use a decomposition result given by Nutov~\cite{Nutov12}.
Namely, our algorithm finds a $T$-intersecting uncrossable subfamily
$\Ffam$
of $\Dfam$,
and it uses the algorithm for a $T$-intersecting uncrossable family to find a node set that covers all demand cuts in $\Ffam$.
After adding to the solution all the nodes in the obtained node set,
the algorithm updates $\Dfam$, setting it equal to the residual family, which consists of all the demand
cuts that are not
covered by the current solution.
This is repeated until $\Dfam$ becomes $T$-intersecting uncrossable.
Nutov proved that the algorithm for a $T$-intersecting uncrossable family is invoked
$O(k)$ times.
Below,
we present an algorithm for
covering a $T$-intersecting uncrossable family in Section~\ref{sssec.covering}, and
combines the Nutov's decomposition result and the covering algorithm in Section~\ref{ssec.combine}.

 \subsubsection{Covering a $T$-intersecting uncrossable family of demand cuts}
\label{sssec.covering}
 
 Here, we explain how to cover a $T$-intersecting uncrossable family $\Ffam$ of demand cuts.
 First, we introduce several properties of a $T$-intersecting uncrossable family.
 Let a \emph{min-core} signify a minimal demand cut in $\Ffam$.

 \begin{lemma}
  \label{lem.disjoint}
 Let $\Ffam$ be an uncrossable family of subsets of $V$. Let $X,Y \in
 \Ffam$.
 If $X$ is a min-core of $\Ffam$, then 
 either $X \subseteq Y$ or $X \cap Y^+=\emptyset=X^+\cap Y$ holds.
 In particular, if both $X$ and $Y$ are min-cores of $\Ffam$,
  the latter condition holds.
 \end{lemma}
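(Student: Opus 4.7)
The plan is to apply the uncrossability dichotomy of $\Ffam$ directly to the pair $X,Y$ and then collapse each alternative using the minimality of $X$. By definition of uncrossability, either (a) both $X \cap Y$ and $X \cup Y$ lie in $\Ffam$, or (b) both $X \setminus Y^+$ and $Y \setminus X^+$ lie in $\Ffam$. The two outcomes will correspond exactly to the two cases in the statement of the lemma.

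In case (a), $X \cap Y$ is a member of $\Ffam$ that is contained in $X$. Since every member of $\Ffam \subseteq \Dfam$ is nonempty (each demand cut must meet $T$ by definition of a Steiner $(T,r)$-cut), the minimality of $X$ forces $X \cap Y = X$, i.e., $X \subseteq Y$. In case (b), $X \setminus Y^+ \in \Ffam$ is similarly a nonempty subset of $X$, so minimality forces $X \setminus Y^+ = X$, which is precisely $X \cap Y^+ = \emptyset$. To conclude that $X^+ \cap Y = \emptyset$ as well, I would invoke the symmetry of the edge relation: $X \cap Y^+ = \emptyset$ asserts that no node of $X$ lies in $Y$ or has a neighbor in $Y$, and since adjacency is symmetric this is the same as asserting that no node of $Y$ lies in $X$ or has a neighbor in $X$, i.e., $X^+ \cap Y = \emptyset$.

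For the ``in particular'' clause, suppose both $X$ and $Y$ are distinct min-cores. If the first alternative $X \subseteq Y$ held, then $X$ would be a member of $\Ffam$ strictly contained in $Y$, contradicting the minimality of $Y$; hence only the second alternative is possible, and $X \cap Y^+ = \emptyset = X^+ \cap Y$ holds. I do not expect any serious obstacle: the argument is essentially a one-step reading of the uncrossability axiom combined with minimality, and the only mild subtlety is the elementary equivalence $X \cap Y^+ = \emptyset \Leftrightarrow X^+ \cap Y = \emptyset$, which is immediate from the symmetry of the underlying graph.
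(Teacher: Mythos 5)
Your proof is correct and follows essentially the same route as the paper's: apply the uncrossability dichotomy to $X,Y$ and use the minimality of $X$ to force $X\cap Y=X$ in one branch and $X\setminus Y^+=X$ in the other (the paper phrases this as a contradiction argument, and likewise notes the symmetry $X\cap Y^+=\emptyset \Leftrightarrow X^+\cap Y=\emptyset$). The only difference is cosmetic, and your explicit "distinct" in the last clause is a reasonable reading of the paper's implicit assumption.
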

  \begin{proof}
   Note that
   $X \cap Y^+=\emptyset=X^+\cap Y$ holds if and only if $X \cap Y^+ = \emptyset$.
 Suppose that $X \not\subseteq Y$ and
 $X \cap Y^+\neq \emptyset$.
 Since $\Ffam$ is uncrossable, $X \cap Y, X\cup Y \in \Ffam$ or
 $X\setminus Y^+, Y \setminus X^+ \in \Ffam$ holds.
 $X\not\subseteq Y$ implies $X \cap Y \subset X$,
  and $X\cap Y^+ \neq \emptyset$ implies $X \setminus Y^+ \subset X$.
  Hence, if either $X \cap Y \in \Ffam$
  or $X\setminus Y^+ \in \Ffam$ holds, we have a contradiction with the minimality of $X$.
  \end{proof}

For $S \subseteq V$,
let $\Ffam_S$ denote $\{X \in \Ffam \colon S \cap
\Gamma(X)=\emptyset\}$.

 \begin{lemma}
  \label{lem.residual}
 If $\Ffam \subseteq 2^V$ is $T$-intersecting uncrossable, then
  $\Ffam_S$ is also $T$-intersecting uncrossable
 for any $S \subseteq V$.
 \end{lemma}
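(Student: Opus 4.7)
The plan is to show that the four set operations appearing in the definition of uncrossability, namely $X \cap Y$, $X \cup Y$, $X \setminus Y^+$, and $Y \setminus X^+$, all have neighborhoods contained in $\Gamma(X) \cup \Gamma(Y)$. Once this is established, uncrossability of $\Ffam_S$ will follow almost immediately from uncrossability of $\Ffam$, since membership in $\Ffam_S$ depends only on whether the neighborhood meets $S$.

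First I would verify the four containments. For $X \cap Y$ and $X \cup Y$ the inclusion $\Gamma(X \cap Y), \Gamma(X \cup Y) \subseteq \Gamma(X) \cup \Gamma(Y)$ is routine: any node with a neighbor in $X \cap Y \subseteq X$ that lies outside $X$ belongs to $\Gamma(X)$, and similarly on the $Y$ side; for $X \cup Y$ the definition of $\Gamma$ gives the claim directly. The interesting case is $X \setminus Y^+$. Here I take $v \in \Gamma(X \setminus Y^+)$ with witness neighbor $u \in X \setminus Y^+$; note crucially that $u$ has no neighbor in $Y$ (else $u \in \Gamma(Y) \subseteq Y^+$). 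If $v \notin X$ then $v \in \Gamma(X)$ and we are done; otherwise $v \in X$ and, since $v \notin X \setminus Y^+$, we have $v \in Y^+$. The subcase $v \in Y$ would make $u$ a neighbor of a node in $Y$, contradicting the observation about $u$; hence $v \in \Gamma(Y)$. The argument for $\Gamma(Y \setminus X^+)$ is symmetric.

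Next I would conclude. Given $X, Y \in \Ffam_S$, both $\Gamma(X)$ and $\Gamma(Y)$ are disjoint from $S$, so their union is too. Applying uncrossability of $\Ffam$ to $X$ and $Y$ gives one of the two pairs $\{X \cap Y, X \cup Y\}$ or $\{X \setminus Y^+, Y \setminus X^+\}$ inside $\Ffam$; by the neighborhood containments just established, every set in that pair has neighborhood disjoint from $S$, so the pair lies in $\Ffam_S$. This verifies condition (i). For condition (ii), if $X \cap Y \cap T \neq \emptyset$ then uncrossability of $\Ffam$ forces the $X \cap Y$, $X \cup Y$ alternative, and both belong to $\Ffam_S$ by the same reasoning.

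The main, and essentially only, obstacle is the neighborhood containment for $X \setminus Y^+$, which requires carefully exploiting that the definition of $Y^+$ bars both membership in $Y$ and adjacency to $Y$. The remainder is bookkeeping.
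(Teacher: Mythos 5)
Your proof is correct and follows essentially the same route as the paper: both arguments rest on the containments $\Gamma(X\cap Y),\Gamma(X\cup Y),\Gamma(X\setminus Y^+),\Gamma(Y\setminus X^+)\subseteq \Gamma(X)\cup\Gamma(Y)$, from which membership in $\Ffam_S$ transfers immediately. The only difference is that you spell out the verification of the containment for $X\setminus Y^+$ (and explicitly check condition (ii)), which the paper states without proof.
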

\begin{proof}
 Let $X,Y \in \Ffam_S$. Since $\Ffam$ is uncrossable, $X\cap Y, X\cup Y
 \in \Ffam$ or $X\setminus Y^+, Y\setminus X^+ \in \Ffam$ holds.

 Suppose that the former holds.
 If $X \cap Y \not\in \Ffam_S$, then $\Gamma(X\cap Y)$ includes a node
 $v \in S$. Since $\Gamma(X\cap Y) \subseteq \Gamma(X)\cup \Gamma(Y)$,
 $v$ covers $X$ or $Y$. However, this contradicts the fact that $X,Y \in
 \Ffam_S$. Hence, $X\cap Y\in \Ffam_S$.
 Similarly, we can prove $X\cup Y \in \Ffam_S$, because
 $\Gamma(X\cup Y) \subseteq \Gamma(X)\cup \Gamma(Y)$.

 Next, suppose that the latter holds.
 We note that
 $\Gamma(X \setminus Y^+) \subseteq \Gamma(X) \cup \Gamma(Y)$ and
 $\Gamma(Y\setminus X^+) \subseteq \Gamma(X) \cup \Gamma(Y)$.
 Hence, as above, we can prove that
 $X \setminus Y^+, Y\setminus X^+ \in \Ffam_S$.
\end{proof}

Now we present  our algorithm for covering a $T$-intersecting uncrossable family $\Ffam$
  of demand cuts.
  The algorithm consists of the increase phase and the reverse deletion
 phase.
 First, we present the increase phase.
In this phase, the algorithm maintains the following variables.

  \begin{itemize}
   \item  The dual solution $y$.
	  This is initialized as $y(X):=0$, $X \in \Ffam$.
	  During the increase phase, $y$ is always a
	  feasible solution to \eqref{eq.dual-lp}.
   \item  A solution $S \subseteq V \setminus T$.
	  This is initialized to $S$:=$\emptyset$. The increase phase terminates when $S$ covers all the demand cuts in $\Ffam$.
  \end{itemize}

   In each iteration, the algorithm simultaneously increases $y(X)$ for each min-core
   $X$ of $\Ffam_S$.
   For ease of presentation, we will consider this over time.
   During $\epsilon$ units of time, $y(X)$ is increased by
   $\epsilon$ for each min-core $X$ of $\Ffam_S$.
   When $\sum_{X \in \Ffam\colon v \in \Gamma(X)}y(X)$ becomes equal to $w(v)$ for some $v
 \in V \setminus (T \cup S)$,
 the algorithm stops increasing $y$
 and adds $v$ to $S$.
After this update, if $S$ covers all demand cuts in $\Ffam$, then
the algorithm terminates the increase phase and proceeds to the
 reverse deletion phase.
 Otherwise, the algorithm proceeds to the next iteration of the increase
   phase.
 By definition, in these steps, $y$ is always a feasible solution to
 \eqref{eq.dual-lp}.
 At the end of the increase phase, $S$ covers all the demand cuts in $\Ffam$.
 
 Suppose that the increase phase
 starts at time $0$ and ends at time $\Delta$.
 Let $\tau \in [0,\Delta]$ be a moment during the increase phase.
 Let $S_{\tau}, \Ffam_{\tau}, \Mfam_{\tau}$ denote $S$, $\Ffam_S$,
 and the family of min-cores of $\Ffam_S$, respectively,
 at time $\tau$.
 For each $v \in V$, let $d_{\tau}(v)$ denote $|\{X \in \Mfam_{\tau}\colon v
 \in \Gamma(X)\}|$.
 Let $\Lfam =\bigcup_{\tau \in [0,\Delta]}\Mfam_{\tau}$.
 For the analysis given below, we observe the following properties.

  \begin{lemma}
   \label{lem.stronglaminar}
  $\Lfam$ is strongly laminar (i.e., any $X,Y\in \Lfam$ satisfy $X\subseteq Y$, $Y \subseteq X$, or $X \cap Y^+ = \emptyset = X^+ \cap Y$ holds).
  \end{lemma}
\begin{proof}
 Let $X,Y \in \Lfam$. Suppose that $X \in \Mfam_{\tau'}$ and $Y \in
 \Mfam_{\tau''}$
 for some $\tau' \leq \tau''$.
 Then, $Y \in \Ffam_{\tau'}$, because $Y \in \Mfam_{\tau''}$ means that $Y$ is
 not covered by $S_{\tau'}$.
 Hence, by Lemma~\ref{lem.disjoint}, $X \subseteq Y$ or $X\cap
 Y^+=\emptyset=X^+ \cap Y$ holds.
\end{proof}
 
  \begin{lemma}
   \label{lem.degree-udg}
   If $G$ is a unit disk graph, 
   $d_{\tau}(v)\leq 5$ holds for any $v \in V \setminus T$ and $\tau \in
   [0,\Delta]$.
  \end{lemma}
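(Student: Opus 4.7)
The plan is to deduce the bound $d_\tau(v)\le 5$ directly from the unit disk kissing property (Lemma~\ref{lem.kissingnum}) together with the strong disjointness of min-cores in an uncrossable family (Lemma~\ref{lem.disjoint}). Fix $v\in V\setminus T$ and $\tau\in[0,\Delta]$, and suppose for contradiction that there exist six distinct min-cores $X_1,\ldots,X_6\in\Mfam_\tau$ with $v\in\Gamma(X_i)$ for every $i$. For each $i$, I would choose a node $u_i\in X_i$ adjacent to $v$, which exists by the definition of $\Gamma$.

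The key step is to show that $u_1,\ldots,u_6$ are six \emph{distinct} and \emph{pairwise non-adjacent} neighbors of $v$. Since each $X_i\in\Mfam_\tau$ is a min-core of the uncrossable family $\Ffam_\tau$ (note that $\Ffam_\tau=\Ffam_{S_\tau}$ is uncrossable by Lemma~\ref{lem.residual}), Lemma~\ref{lem.disjoint} applied to the pair of min-cores $X_i,X_j$ (with $i\ne j$) yields $X_i\cap X_j^+=\emptyset=X_i^+\cap X_j$. In particular $X_i\cap X_j=\emptyset$, so $u_i\ne u_j$; moreover $u_i\in X_i$ together with $X_i\cap X_j^+=\emptyset$ forces $u_i\notin X_j^+$, which means $u_i$ has no neighbor in $X_j$, and hence $u_iu_j\notin E$. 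Also, each $u_i$ is distinct from $v$ because $v\in\Gamma(X_i)$ implies $v\notin X_i\ni u_i$.

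Having produced six distinct neighbors $u_1,\ldots,u_6$ of $v$ that span no edge among themselves, I would invoke Lemma~\ref{lem.kissingnum} for the unit disk graph $G$: it guarantees an edge between some pair $u_i,u_j$, contradicting the previous paragraph. Therefore no such six min-cores can exist, and $d_\tau(v)\le 5$.

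I do not foresee a real obstacle here; the work is essentially assembling Lemma~\ref{lem.disjoint} (to convert the neighbors-of-min-cores into a pairwise non-adjacent family) and Lemma~\ref{lem.kissingnum} (the unit-disk kissing bound). The only care needed is to verify that the uncrossable family used in Lemma~\ref{lem.disjoint} is really $\Ffam_\tau$ and that $v\notin X_i$, both of which are immediate from the definitions.
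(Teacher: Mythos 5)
Your proof is correct and follows essentially the same route as the paper: choose a neighbor $u_i\in X_i$ of $v$ for each min-core, use Lemma~\ref{lem.residual} and Lemma~\ref{lem.disjoint} to get $X_i\cap X_j^+=\emptyset$ (hence the $u_i$ are distinct and pairwise non-adjacent), and contradict Lemma~\ref{lem.kissingnum} when there are six of them. No gaps.
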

 \begin{proof}
  Let
  $X_1,\ldots,X_{d_{\tau}(v)}$ be the
  members of $\Mfam_{\tau}$ whose neighbor sets include
  $v$.
  Let $u_i$ denote
  a neighbor of $v$ in $X_i$ for each $i=1,\ldots,d_{\tau}(v)$.
  Notice that by Lemma~\ref{lem.residual}, $\Ffam_{\tau}$ is uncrossable.
  Thus, if $i\neq j$, then by Lemma~\ref{lem.disjoint}, $X_i \cap X_j^+ = \emptyset = X_i^+ \cap X_j$.
  If $d_{\tau}(v) \geq 6$,
  $u_i$ and $u_j$ are adjacent 
  for some $i,j \in \{1,\ldots,d_{\tau}(v)\}$ with $i\neq j$.
  However,
  this indicates that $u_j \in X^+_i$, which contradicts $X_j \cap
  X^+_i =\emptyset$.
  \end{proof}

  In the reverse deletion phase, the algorithm 
  modifies $S$ into an inclusionwise minimal node set that covers $\Ffam$ as follows.
  Let $S:=\{v_1,\ldots,v_{|S|}\}$, where $v_i$ is the $i$-th node added to $S$ 
  in the increase phase
  for each $i=1,\ldots,|S|$.
  The reverse deletion phase investigates nodes $v_i \in S$ in decreasing order
  of their subscripts. If
  $S \setminus \{v_i\}$ covers all demand cuts in $\Ffam$,
  $v_i$ is removed from $S$.
  Let $\tilde{S}$ denote $S$ after all nodes have been investigated.
  The algorithm outputs $\tilde{S}$ as a solution.
  We will show that $\tilde{S}$ is a 15-approximate solution.
  
  \begin{lemma}\label{lem.witness}
   Let $\tau \in [0,\Delta]$,
   and let $v_i \in \tilde{S}$ with $d_{\tau}(v_i)\geq 1$.
   There exist
   $W \in \Ffam_{\tau}$ and $X \in \Mfam_{\tau}$ such that $\Gamma(W) \cap (\tilde{S} \cup
   \{v_1,\ldots,v_i\})=\{v_i\}$,
    $v_i \in \Gamma(X)$, and
   $X \subseteq W$ or $X \cap W^+= \emptyset = X^+ \cap W$.
  \end{lemma}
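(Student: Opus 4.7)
The plan is to let $W$ be the witness that justified keeping $v_i$ in the deletion phase, and to let $X$ be any min-core at time $\tau$ whose neighborhood contains $v_i$; the structural relation between $X$ and $W$ will then fall out of uncrossability and Lemma~\ref{lem.disjoint}.

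First I would unpack the deletion phase. When $v_i$ is inspected, the current set is precisely $S_i := \tilde{S} \cup \{v_1,\ldots,v_i\}$: the nodes $v_1,\ldots,v_i$ have not been inspected yet, while the nodes $v_j$ with $j > i$ that remain are exactly $\tilde{S}\setminus\{v_1,\ldots,v_i\}$. Since $v_i$ is not removed, $S_i\setminus\{v_i\}$ fails to cover $\Ffam$, while $S_i$ still covers it (an invariant of the deletion phase). Hence there is some $W \in \Ffam$ with $\Gamma(W)\cap S_i = \{v_i\}$, which gives the required equation $\Gamma(W) \cap (\tilde{S}\cup\{v_1,\ldots,v_i\}) = \{v_i\}$.

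Next I would verify that this same $W$ belongs to $\Ffam_\tau$. By hypothesis $d_\tau(v_i)\geq 1$, so pick any $X \in \Mfam_\tau$ with $v_i \in \Gamma(X)$. Since members of $\Ffam_\tau = \Ffam_{S_\tau}$ are uncovered by $S_\tau$ and $v_i \in \Gamma(X)$, we must have $v_i \notin S_\tau$; thus $\tau$ strictly precedes the moment at which $v_i$ is added to $S$, and therefore $S_\tau \subseteq \{v_1,\ldots,v_{i-1}\}$. But by the previous step $\Gamma(W)$ is disjoint from $\{v_1,\ldots,v_{i-1}\}$, so $\Gamma(W)\cap S_\tau=\emptyset$ and $W \in \Ffam_\tau$.

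Finally, Lemma~\ref{lem.residual} tells us that $\Ffam_\tau$ is uncrossable, and $X$ is by construction a min-core of $\Ffam_\tau$. Applying Lemma~\ref{lem.disjoint} to $X$ and $W$ in $\Ffam_\tau$ yields $X \subseteq W$ or $X\cap W^+=\emptyset=X^+\cap W$, completing the proof. The only genuinely delicate step is the middle one: it is there that we must use $d_\tau(v_i)\geq 1$ to certify that $v_i$ has not yet been added at time $\tau$, so that the deletion-phase witness $W$ is still alive in the residual family $\Ffam_\tau$; the rest is essentially bookkeeping plus a direct appeal to the min-core disjointness lemma.
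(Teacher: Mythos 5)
Your proof is correct and follows essentially the same route as the paper's: take $W$ to be the deletion-phase witness for $v_i$, observe it survives into $\Ffam_\tau$ because $d_\tau(v_i)\geq 1$ forces $S_\tau\subseteq\{v_1,\ldots,v_{i-1}\}$, and then apply Lemma~\ref{lem.residual} and Lemma~\ref{lem.disjoint}. Your version is in fact slightly more explicit than the paper's about why $W\in\Ffam_\tau$, which is the one delicate step.
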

   \begin{proof}
    When $v_i$ is investigated in the reverse deletion phase,
    the solution set $S$ is $\tilde{S}\cup \{v_1,\ldots,v_i\}$.
    Hence, 
    $v_{i}\in \tilde{S}$ implies that
    there exists $W \in \Ffam$
    such that 
    $\Gamma(W) \cap (\tilde{S} \cup \{v_1,\ldots,v_i\})=\{v_i\}$.
      Note that $W \in \Ffam_{\tau}$ holds because
    no node in $\{v_1,\ldots,v_{i-1}\}$ covers $W$.
    Since $d_{\tau}(v_i)\geq 1$,
     there exists $X \in \Mfam_{\tau}$ with $v_i \in \Gamma(X)$.
    By Lemma~\ref{lem.disjoint}, $X \subseteq W$ or $X \cap W^+=\emptyset=X^+ \cap W$ holds.
   \end{proof}

  For $v_i \in \tilde{S}$, we will call $(W,X)$ in Lemma~\ref{lem.witness} a
  \emph{witness pair} of $v_i$.

 \begin{figure}[t]
  \centering
 \includegraphics[bb=13 12 95 69]{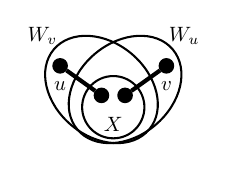}
 \caption{When $X=X_u=X_v$ for first-type nodes $u$ and $v$}
 \label{fig:counting2}
 \end{figure}

 \begin{figure}[t]
  \centering
\includegraphics[bb=9 10 161 82]{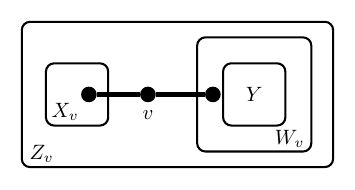}
 \caption{Relationship of $X_v$, $W_v$, $Y$, and $Z_v$ for a second-type node $v$}
 \label{fig:counting}
 \end{figure}
  
\begin{lemma}\label{lem.degree}
 $\sum_{v \in \tilde{S}} d_{\tau}(v) \leq 15|\Mfam_{\tau}|-5$ for
 any $\tau \in [0,\Delta]$.
\end{lemma}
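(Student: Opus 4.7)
The plan is to bound $\sum_{v \in \tilde{S}} d_\tau(v)$ by combining the per-vertex cap $d_\tau(v) \leq 5$ from Lemma~\ref{lem.degree-udg} with a combinatorial bound on the "active" set $V' := \{v \in \tilde{S} : d_\tau(v) \geq 1\}$. Since the remaining vertices contribute nothing, the target reduces to showing $|V'| \leq 3|\Mfam_\tau| - 1$; multiplying by $5$ then yields the claimed $15|\Mfam_\tau| - 5$.

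First, using Lemma~\ref{lem.witness}, I assign to each $v_i \in V'$ a witness pair $(W_i, X_i)$, with $W_i \in \Ffam_\tau$ uniquely covered by $v_i$ within $\tilde{S} \cup \{v_1,\ldots,v_i\}$ and $X_i \in \Mfam_\tau$ satisfying $X_i \subseteq W_i$ or $X_i \cap W_i^+ = \emptyset = X_i^+ \cap W_i$. Then, exploiting the uncrossability of $\Ffam_\tau$ from Lemma~\ref{lem.residual}, a standard uncrossing argument—processing the $v_i$'s in arrival order and replacing any crossing pair $W_i, W_j$ by $W_i \cap W_j, W_i \cup W_j$ or by $W_i \setminus W_j^+, W_j \setminus W_i^+$—modifies the witnesses so that $\{W_i\}_{v_i \in V'}$ is strongly laminar while preserving the unique-coverage property. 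Combined with the pairwise strong disjointness of min-cores from Lemma~\ref{lem.disjoint}, the enlarged family $\Lfam^* := \Mfam_\tau \cup \{W_i\}_{v_i \in V'}$ is strongly laminar, with $\Mfam_\tau$ forming its leaves.

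Next, I bound $|V'|$ by a laminar-forest argument on $\Lfam^*$. Because every $W_i$ contains at least one min-core (from the minimality definition of $\Mfam_\tau$ together with Lemma~\ref{lem.uncrossability}), each $W_i$ sits at an internal position of $\Lfam^*$. Split them into \emph{branching} witnesses (containing at least two min-cores) and \emph{chain} witnesses (containing exactly one). Standard laminar counting caps the branching ones by $|\Mfam_\tau| - 1$. For the chain witnesses, I attach each such $W_i$ to its unique enclosed min-core $Y_i$ together with a $Y_i$-neighbor of $v_i$ (or, if $X_i \cap W_i^+ = \emptyset$, to the $X_i$-neighbor of $v_i$), and then invoke Lemma~\ref{lem.kissingnum}—the UDG kissing-number property—to cap the number of chain witnesses that can share each min-core by a constant; a short accounting then gives at most $2|\Mfam_\tau|$ chain witnesses, hence $|V'| \leq 3|\Mfam_\tau| - 1$.

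The main obstacle is precisely this chain-witness bound: in a generic strongly laminar family chains can be arbitrarily long, so the UDG-specific Lemma~\ref{lem.kissingnum} is what keeps the count constant per min-core. A secondary subtlety is that the uncrossing step must respect the arrival ordering $v_1, v_2, \ldots$, since the witness property in Lemma~\ref{lem.witness} refers to $\tilde{S} \cup \{v_1,\ldots,v_i\}$; carefully re-establishing unique coverage after each replacement (and checking that the neighborhoods behave as required under $\cap$, $\cup$, and $\setminus (\cdot)^+$) is the technical bookkeeping that drives the proof.
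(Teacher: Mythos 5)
Your reduction to showing $|\{v \in \tilde S : d_\tau(v)\geq 1\}| \leq 3|\Mfam_\tau|-1$ and then multiplying by the cap $d_\tau(v)\leq 5$ from Lemma~\ref{lem.degree-udg} is exactly the right frame, and starting from the witness pairs of Lemma~\ref{lem.witness} is correct. But both halves of your count have genuine gaps, and your split into ``branching'' versus ``chain'' witnesses (by the number of enclosed min-cores) is not the split that makes the argument work.

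First, the branching bound. A laminar family whose leaves are the $|\Mfam_\tau|$ min-cores has at most $|\Mfam_\tau|-1$ members \emph{with at least two children}, but that is not the number of members \emph{containing at least two min-cores}: a nested chain $W_{i_1}\subset W_{i_2}\subset\cdots$ sitting above a single branching point all contain the same two min-cores, each has only one child, and laminarity alone puts no bound on the length of such a chain. So ``standard laminar counting'' does not cap your branching witnesses by $|\Mfam_\tau|-1$; you need an argument that two nested witnesses cannot coexist. The paper supplies exactly this: for two nodes whose witness pairs satisfy $X_u=X_v\subseteq W_u\cap W_v$ it derives $W_u\cup W_v\in\Ffam$ with $\Gamma(W_u\cup W_v)\cap\tilde S=\emptyset$, contradicting the minimality of $\tilde S$ after the deletion phase; for the remaining nodes it shows $W_v\subseteq W_u\subset Z_v$ forces $v\in W_u$ and hence $X_v\subseteq W_u$, contradicting the minimality of $Z_v$.

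Second, the chain bound via Lemma~\ref{lem.kissingnum} does not go through. That lemma needs six pairwise nonadjacent objects that are all neighbors of one common node; for a nested chain $W_{i_1}\subset\cdots\subset W_{i_t}$ sharing a single min-core $Y$, the covering nodes $v_{i_j}$ are distinct, $v_{i_j}$ need not have any neighbor in $Y$ (it only has one somewhere in $W_{i_j}$), and the min-cores $X_{i_j}$ of the witness pairs can all differ, so there is no common center to apply the kissing-number argument to. Even granting some constant, it would be $5$, not $2$, so your arithmetic would give roughly $6|\Mfam_\tau|$ active nodes and $30|\Mfam_\tau|$ overall, not the stated $15|\Mfam_\tau|-5$. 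The paper's count here is purely combinatorial and uses no geometry: it classifies nodes by whether $X_v\subseteq W_v$ or $X_v\cap W_v^+=\emptyset$, and bounds the second class by $2|\Mfam_\tau|-1$ using the strongly laminar family $\Lfam_\tau=\bigcup_{\tau'\in[\tau,\Delta]}\Mfam_{\tau'}$ of min-cores arising over the remainder of the increase phase, charging each such node to a child of a branching member of that family. Finally, note that the paper never laminarizes the witness sets themselves, which sidesteps the delicate reordering issue you flag (uniqueness of coverage in Lemma~\ref{lem.witness} is measured against the $i$-dependent set $\tilde S\cup\{v_1,\ldots,v_i\}$).
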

  \begin{proof}
   Let $v$ be a node in $\tilde{S}$ such that 
   $d_{\tau}(v) \geq 1$, from which
   $v \not\in S_{\tau}$ follows
   because $S_{\tau}$ covers no demand cut in $\Mfam_{\tau}$ by
   their definitions.
   We categorize such a node $v$ into two types.
   If there exists a witness pair
   $(W_v,X_v)$ of $v$
   such that $X_v \subseteq W_v$, $v$ is said to be of the first type;
   otherwise, $v$ is said to be of the second type.

   Let us count the number of nodes $v$ of the first type.
   Let $X \in \Mfam_{\tau}$. Suppose that there are two nodes of the first type, $u,v \in
   \tilde{S}$,
   such that $X =X_u= X_v$.
   Then $X \subseteq W_u \cap W_v$ holds.
   Recall that $\Gamma(W_u)\cap \tilde{S}=\{u\}$, 
   $\Gamma(W_v)\cap \tilde{S}=\{v\}$,
   and $u,v \in \Gamma(X) \subseteq (W_u \cap W_v)^+$
   (see Figure~\ref{fig:counting2}).
   These imply that $\Gamma(W_u \cup W_v)\cap \tilde{S}=\emptyset$.
   Since $W_{u}\cap W_v \cap T \supseteq X \cap T \neq \emptyset$
   and $\Ffam$ is $T$-intersecting, we have
   $W_u \cup W_v \in \Ffam$. However, this contradicts the
   definition of $\tilde{S}$.
   Therefore, for each $X \in \Mfam_{\tau}$,
   there exists at most one node of the first type with $X_v=X$. That is to say,
  there are at most $|\Mfam_{\tau}|$ nodes of the first type.

  Next, we count the number of nodes of the second type.
   Let $v$ be a node of the second type.
   There exists $Y \in \Mfam_{\tau}$
   such that  $Y \subseteq W_v$, because $W_v$ is not covered by
   any node that was added to $S$ earlier than $v$ in the increase phase.
   Since $v$ is not the node of the first type,
   $v$ is not included in $\Gamma(Y)$, and thus $Y \neq X_v$.
   Let $\Lfam_{\tau}=\bigcup_{\tau' \in [\tau,\Delta]}\Mfam_{\tau'}$.
   Since
   $\Lfam_{\tau} \subseteq \Lfam$, and by Lemma~\ref{lem.stronglaminar},
   $\Lfam$ is a strongly laminar family,
   $\Lfam_{\tau}$ is a strongly laminar family.
   We assume that $\Lfam_{\tau}$ has a unique maximal member;
   if
   there is more than one maximal member of 
   $\Lfam_{\tau}$, we add the node set $V$ to $\Lfam_{\tau}$, which has
   no effect on the following discussion.
   Let $Z$ be a member of $\Lfam_{\tau}$ such that $Z$ became a
   min-core of the residual family of $\Ffam$ when $v$ was added to $S$
   in the increase phase.
   Then $v \in Z$, and hence $X_v \subseteq Z$ and $Y \subseteq
   W_v\subseteq Z$,
   because $v \in X^+_v$ and $v \in W^+_v$.
   Let $Z_v$ be a minimal member of $\Lfam_{\tau}$ such that $X_v
   \subseteq Z_v$ and $W_v \subseteq Z_v$;
   Figure~\ref{fig:counting} illustrates these definitions.
   Note that $Z_v$ has at least two children in $\Lfam_{\tau}$.
   Suppose that there exists a node of the second type, $u \in \tilde{S}\setminus \{v\}$,
   such that $d_{\tau}(u)\geq 1$ and   
   $W_v \subseteq W_u \subset Z_v$.
   Then,
   $v \in W_u$ holds, because $\Gamma(W_u) \cap \tilde{S}=\{u\}$,
   and $X_v \subseteq W_u$ because of the strong laminarity of
   $\Lfam_{\tau}$.
   Since this contradicts the
   definition of $Z_v$,
   no such $u$ exists.
   In summary, this means that
   the number of nodes of the second type
   is at most $\sum_{Z \in \Lfam_{\tau}: \text{ch}(Z)\geq 2} \text{ch}(Z)$,
   where $\text{ch}(Z)$ denotes the number of children of $Z$ in
   $\Lfam_{\tau}$.
   Since the leaf set of $\Lfam_{\tau}$ is $\Mfam_{\tau}$,
   we have $\sum_{Z \in \Lfam_{\tau}: \text{ch}(Z)\geq 2} \text{ch}(Z) \leq 2|\Mfam_{\tau}|-1$.

    Thus, $|\{v \in \tilde{S} \colon d_{\tau}(v) \geq 1\}| \leq  3|\Mfam_{\tau}|-1$,
   and $d_{\tau}(v) \leq 5$ for each $v \in \tilde{S}$, and the lemma has been proven.
  \end{proof} 

  \begin{theorem}
   \label{thm.coveringuncross}
  If $\Ffam$ is a $T$-intersecting uncrossable family of demand cuts on a unit disk graph,
  there exists a $15$-approximation algorithm for finding a minimum
  weight node set that covers all demand cuts in $\Ffam$.
  \end{theorem}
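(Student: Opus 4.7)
The plan is to combine the standard primal-dual analysis with the degree bound in Lemma~\ref{lem.degree}. First I would argue feasibility: the increase phase terminates only when $S$ covers all demand cuts of $\Ffam$, and the reverse-deletion phase removes $v_i$ only when $S\setminus \{v_i\}$ still covers $\Ffam$, so $\tilde S$ is feasible by construction. Throughout the execution, $y$ is maintained as a feasible solution to \eqref{eq.dual-lp} because the algorithm stops growing the variables as soon as some dual constraint becomes tight, so by weak LP duality $\sum_{X \in \Ffam} y(X) \leq \opt$, where $\opt$ denotes the minimum weight of a node set covering $\Ffam$.

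Next I would bound $w(\tilde S)$. Since every node $v \in \tilde S \subseteq S$ was added when its dual constraint became tight, we have $w(v)=\sum_{X \in \Ffam:\, v \in \Gamma(X)} y(X)$. Summing over $\tilde S$ and swapping the order of summation,
\begin{equation*}
w(\tilde S)=\sum_{v \in \tilde S} w(v)=\sum_{X \in \Ffam} y(X)\cdot |\Gamma(X)\cap \tilde S|.
\end{equation*}
The key step is to interpret the dual variables as accumulated over time. During the increase phase, $y(X)$ grows at unit rate exactly when $X$ is a min-core of the residual family $\Ffam_{S_\tau}$, so $y(X)=\int_0^\Delta \mathbf{1}[X \in \Mfam_\tau]\, d\tau$. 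Substituting and again swapping sum and integral,
\begin{equation*}
\sum_{X \in \Ffam} y(X)\cdot |\Gamma(X)\cap \tilde S|
=\int_0^\Delta \sum_{X \in \Mfam_\tau} |\Gamma(X)\cap \tilde S|\, d\tau
=\int_0^\Delta \sum_{v \in \tilde S} d_\tau(v)\, d\tau.
\end{equation*}

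Now I apply Lemma~\ref{lem.degree}, which gives $\sum_{v \in \tilde S} d_\tau(v) \leq 15|\Mfam_\tau|-5 \leq 15|\Mfam_\tau|$ for every $\tau \in [0,\Delta]$. This yields
\begin{equation*}
w(\tilde S) \leq 15 \int_0^\Delta |\Mfam_\tau|\, d\tau = 15 \sum_{X \in \Ffam} y(X) \leq 15\cdot \opt,
\end{equation*}
which is the desired $15$-approximation guarantee. The analytical heavy lifting has already been done in Lemma~\ref{lem.degree} (which itself rests on the unit-disk degree bound in Lemma~\ref{lem.degree-udg} and the strong laminarity of $\Lfam$); the remaining obstacle here is purely bookkeeping, namely correctly justifying the time-integration rewriting and verifying that $\sum_{X \in \Ffam} y(X) \leq \opt$. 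The latter follows because any feasible cover $S^*$ yields $\sum_X y(X) \leq \sum_{v \in S^*} \sum_{X:\, v \in \Gamma(X)} y(X) \leq \sum_{v \in S^*} w(v)=\opt$, using the dual feasibility of $y$ together with the fact that every $X \in \Ffam$ is covered by some $v \in S^*$.
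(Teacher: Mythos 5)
Your proposal is correct and follows essentially the same route as the paper's proof: dual feasibility of $y$ gives the lower bound $\sum_{X\in\Ffam}y(X)\leq\opt$, tightness of the constraint for each $v\in\tilde S$ gives $w(v)=\sum_{X\colon v\in\Gamma(X)}y(X)=\int_0^\Delta d_\tau(v)\,d\tau$, and Lemma~\ref{lem.degree} supplies the factor $15$. The only cosmetic difference is that you make the exchange of summation and the time-integral rewriting explicit, whereas the paper states the same identities directly.
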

  \begin{proof}
  We will prove that the algorithm presented above is a $15$-approximation
   algorithm.
   By definition, the algorithm computes a feasible solution
   $\tilde{S}$ to the problem, and a solution $y \in \Rset_+^{\Ffam}$
  is a feasible solution to \eqref{eq.dual-lp}.
   Since $\sum_{X \in \Ffam}y(X)$ is a lower bound on the optimal value, it suffices to prove that $\sum_{v \in \tilde{S}}w(v)
   \leq 15\sum_{X \in \Ffam}y(X)$.
   
   When the increase phase terminates,
   $y$ satisfies
 $\sum_{X \in \Ffam}y(X)=
 \int_0^{\Delta} |\Mfam_{\tau}| d\tau$.
   Moreover,
   for each node $v \in V \setminus T$ and $\tau \in [0,\Delta]$,
   $\frac{d}{d\tau} \sum_{X \in \Ffam\colon v \in
   \Gamma(X)}y(X)=d_{\tau}(v)$ holds.
   For each $v \in \tilde{S}$,
   $w(v) = \int_{0}^{\Delta} d_{\tau}(v) d\tau$ holds,
   because $w(v)=\sum_{X \in \Ffam \colon v \in \Gamma(X)}y(X)$ holds
 when the algorithm terminates.
 By Lemma~\ref{lem.degree},
 $\sum_{v \in \tilde{S}}w(v)
 = \sum_{v \in \tilde{S}}\int_0^{\Delta} d_{\tau}(v)d\tau
 \leq 15 \int_0^{\Delta} |\Mfam_{\tau}| d\tau = 15\sum_{X \in \Ffam}y(X)$.
  \end{proof}

    Although we illustrated the algorithm by using a continuous measure of time,
  it can be easily discretized;
  Algorithm~\ref{alg.covering} 
  describes the details of our algorithm
  for covering a $T$-intersecting uncrossable family of demand cuts.

     \begin{algorithm}
  \caption{Covering algorithm for a $T$-intersecting uncrossable family of demand cuts}
  \label{alg.covering}
  \begin{algorithmic}
   \REQUIRE
   a unit disk graph $G=(V,E)$, $T  \subseteq V$,
   a nonnegative weight $w(v)$ of each node $v \in V \setminus T$,
   and a $T$-intersecting uncrossable family $\Ffam \subseteq \Dfam$ 
   \ENSURE $S \subseteq V \setminus T$ that covers all demand cuts in $\Ffam$
   \STATE  $S \longleftarrow \emptyset$, $i \leftarrow 0$
   \STATE $\overline{w}(v) \longleftarrow w(v)$ for each $v \in V
   \setminus T$
   \WHILE{$\Ffam_S:= \{X \in \Ffam \colon S \cap \Gamma(X)=\emptyset\} \neq \emptyset$}
   \STATE $i \longleftarrow i+1$
   \STATE $\Mfam \leftarrow \{\text{min-cores of } \Ffam_S\}$
   \STATE $d(v) \longleftarrow |\{X \in \Mfam \colon v \in
   \Gamma(X)\}|$ for each $v \in V \setminus (T \cup S)$
   \STATE
   $\alpha \leftarrow \min_{v \in V \setminus (T \cup S)}
   \overline{w}(v)/d(v)$
   \STATE  $v_i \leftarrow \arg \min_{v \in V \setminus (T \cup S)}
   \overline{w}(v)/d(v)$
   \STATE $S \longleftarrow S \cup \{v_i\}$
   \STATE $\overline{w}(v) \longleftarrow \overline{w}(v) - \alpha d(v)$
   for each $v \in V\setminus (T \cup S)$
   \ENDWHILE
   \FOR{$j = i-1, \ldots,1$}
   \STATE if $S \setminus \{v_j\}$ covers $\Ffam$, then $S \leftarrow S \setminus \{v_j\}$
   \ENDFOR
   \STATE output $S$
  \end{algorithmic}
   \end{algorithm}

Let us discuss the running time of Algorithm~\ref{alg.covering}.
Lemma~\ref{lem.disjoint} indicates that the number of
min-cores in $\Ffam_S$ is $O(n)$ for any $S$.
Hence the algorithm runs in polynomial time if all min-cores of
$\Ffam_S$ can be computed in polynomial time.
If $\Ffam$ arises from the demand cuts of the augmentation problem in
the decomposition of Nutov~\cite{Nutov12},
a min-core of 
$\Ffam_S$ corresponds to a minimal node cut in the graph $G[T\cup S]$.
Thus all min-cores can be computed by applying a max-flow algorithm
repeatedly in this case.
  
  \subsubsection{Combined decomposition and covering algorithm}
  \label{ssec.combine}
  
  We now summarize our algorithm for the augmentation problem.
  As mentioned earlier, 
  due to Nutov~\cite{Nutov12},
  we can find a node set that covers all demand cuts
  by applying an
  algorithm for covering a $T$-uncrossable family of demand cuts $O(k)$
  times.
  Since Theorem~\ref{thm.coveringuncross} gives a constant-approximation algorithm
  for covering a $T$-intersecting uncrossable family,
  we have an $O(k)$-approximation algorithm for covering all demand
  cuts.
  Recall that a demand cut is defined as a Steiner $(T,r)$-cut for a
  fixed $r \in T$.
  For covering all Steiner $T$-cuts $X$ with $|\Gamma_T(X)|=k-1$,
  we choose arbitrary $k$ nodes $r_1,\ldots,r_k \in T$,
  and apply the algorithm fixing $r$ to each of $r_1,\ldots,r_k$.
  The union of the obtained solutions is
  a feasible solution for the augmentation problem
  because, for any Steiner $T$-cut $X$,
  there exists a root $r_i$ such that $X$ or $V \setminus X^+$ is a
  Steiner $(T,r_i)$-cut.
  This presents 
  an $O(k^2)$-approximate
  solution for the augmentation problem.
  Therefore, we arrive at the following conclusion.

   \begin{corollary}
    \label{cor.augmentation}
    The augmentation problem admits 
    an $O(k^2)$-approximation algorithm.
    It outputs a solution $S$ such that $\sum_{v \in S}w(v)$
   is at most $O(k^2)$ times the optimal value of \eqref{eq.primal-lp}.
   \end{corollary}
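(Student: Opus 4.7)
My plan is to assemble the decomposition scheme of Section~\ref{sssec.decomp} and the uncrossable covering algorithm of Theorem~\ref{thm.coveringuncross}, and then to aggregate over $k$ choices of root in $T$. For a fixed root $r\in T$, I will run the iterative scheme from the end of Section~\ref{sssec.decomp}: in round $i$ the family of still-uncovered demand cuts $\Dfam_i$ either is already uncrossable, in which case Theorem~\ref{thm.coveringuncross} is applied once, or else Lemma~\ref{lem.decomposition} partitions its cores into $2\lfloor(k-1)/\gamma_i\rfloor+1$ uncrossable subfamilies on which Theorem~\ref{thm.coveringuncross} is run separately. Using $\gamma_1\geq 1$ and $\gamma_{i+1}\geq 2\gamma_i$, the series $\sum_{i\geq 0}(k-1)/2^i$ telescopes to $O(k)$, so only $O(k)$ uncrossable covering instances are solved in total for this fixed root.

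Next I will charge the cost against~\eqref{eq.primal-lp}. Each invocation of Theorem~\ref{thm.coveringuncross} on an uncrossable $\Ffam\subseteq\Dfam$ returns a cover of weight at most $15\sum_{X\in\Ffam}y(X)$, where $y\in\Rset_+^{\Ffam}$ is the dual produced in the increase phase; extending $y$ by zero on $\Dfam\setminus\Ffam$ gives a feasible solution to~\eqref{eq.dual-lp}, so by weak duality its objective is at most the optimal value of~\eqref{eq.primal-lp}. Summing over the $O(k)$ invocations bounds the rooted solution $S_r$ by $O(k)$ times this LP optimum.

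Finally I pick $k$ distinct roots $r_1,\ldots,r_k\in T$, run the rooted algorithm independently for each, and output $S=\bigcup_{i=1}^k S_{r_i}$, whose weight is $O(k^2)$ times the optimal value of~\eqref{eq.primal-lp}. To verify feasibility for the augmentation problem, I plan to show that no $(T\cup S)$-cut $Z$ satisfies $|\Gamma_{T\cup S}(Z)|\leq k-1$: by Lemma~\ref{lem.Ucut} its trace $X=Z\cap T$ would be a demand cut of $T$, so $|\Gamma_T(X)|=k-1$, and by pigeonhole some $r_i$ lies in $T\setminus X^+$, making $X$ a Steiner $(T,r_i)$-cut that $S_{r_i}$ must cover. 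The hardest step I anticipate is this feasibility translation, where I need to deduce a genuine contradiction with $|\Gamma_{T\cup S}(Z)|\leq k-1$ from the weaker rooted covering condition $\Gamma(X)\cap S_{r_i}\neq\emptyset$; I expect to do so by invoking the equality $\Gamma_{T\cup S}(Z)=\Gamma_T(X)$ obtained in the proof of Lemma~\ref{lem.Ucut} to force any covering node to lie in $T$, which contradicts $S\subseteq V\setminus T$.
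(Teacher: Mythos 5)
Your counting and charging argument is exactly the paper's: $O(k)$ invocations of the uncrossable-family covering algorithm per root (via Lemma~\ref{lem.decomposition} and the doubling $\gamma_{i+1}\geq 2\gamma_i$), each charged through its dual against the optimum of \eqref{eq.primal-lp} by weak duality, and then a union over $k$ roots. That part is correct, and it is essentially all the paper writes down for this corollary; your observation that the dual of each invocation extends by zero to a feasible solution of \eqref{eq.dual-lp} (because residual cuts have no neighbors in the partial solution) is a valid way to make the charging precise.

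The feasibility verification you add, however, has two genuine flaws as stated. First, the pigeonhole only guarantees that some root $r_i$ avoids $\Gamma_T(X)$, a set of $k-1$ nodes; that root may lie in $X$ itself rather than in $T\setminus X^+$, in which case $X$ is not a Steiner $(T,r_i)$-cut and you must instead pass to the complementary cut $(T\cup S)\setminus Z^+$, whose neighbor set is contained in $\Gamma_{T\cup S}(Z)$ and whose closed neighborhood excludes $r_i$. Second, your proposed contradiction does not close: the equality $\Gamma_{T\cup S}(Z)=\Gamma_T(X)\subseteq T$ only constrains nodes of $S$ that are adjacent to $Z$ \emph{and lie outside $Z$}; a node $v\in S\cap Z$ that is adjacent to $X=Z\cap T$ still satisfies $v\in\Gamma(X)$, so it covers $X$ in the single-node sense without contradicting $|\Gamma_{T\cup S}(Z)|\leq k-1$. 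The repair is to use the fact that the demand cuts of Section~\ref{ssec.augmentation} are Steiner $(T,r)$-cuts $X\subseteq V$, not subsets of $T$: take the demand cut to be $Z$ itself (one checks $Z\cap T\neq\emptyset$ and $T\setminus Z^+\neq\emptyset$ from Lemma~\ref{lem.Ucut}, and $|\Gamma_T(Z)|=k-1$ from $\Gamma_T(Z\cap T)\subseteq\Gamma_T(Z)\subseteq\Gamma_{T\cup S}(Z)$). Then any node of $S$ covering $Z$ lies in $\Gamma(Z)\cap S\subseteq\Gamma_{T\cup S}(Z)=\Gamma_T(Z\cap T)\subseteq T$, contradicting $S\cap T=\emptyset$; this is the contradiction you were after.
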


 For the weighted $(k,m)$-CDS problem,
 we first compute an $m$-dominating set $T$, using the algorithm given in
 Section~\ref{subs.dominating}.
 We incrementally increase the connectivity of $T$ 
 by solving the augmentation problem.
 This obviously gives an $O(k^3)$-approximation algorithm.
 This approximation factor can be slightly improved, as follows.

 \begin{corollary}\label{cor.primal-dual}
  There exists an $O(k^2 \log k)$-approximation algorithm for the
  weighted $(k,m)$-CDS problem.
 \end{corollary}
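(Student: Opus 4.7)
The plan is to sharpen the LP value at each augmentation step and then sum via a harmonic bound, tightening the naive $O(k^3)$ analysis down to $O(k^2\log k)$. First, run the constant-factor algorithm of Section~\ref{subs.dominating} to obtain an $m$-dominating set $T_0$ with $w(T_0)=O(\opt)$, where $\opt$ denotes the minimum weight of a $(k,m)$-CDS. Then, for $i=1,\ldots,k$, apply the augmentation algorithm of Corollary~\ref{cor.augmentation} with target connectivity $i$ (so its approximation factor is $O(i^2)$), producing $S_i$ with $w(S_i)=O(i^2)\, L_i$, where $L_i$ is the optimum of the LP relaxation \eqref{eq.primal-lp} at step $i$, and set $T_i:=T_{i-1}\cup S_i$.

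The key lemma I would prove is
\[
L_i \;\leq\; \frac{\opt}{k-i+1}.
\]
Let $S^*$ be an optimal $(k,m)$-CDS. I first show that $G[T_{i-1}\cup S^*]$ is $k$-connected. Since $G[S^*]$ is $k$-connected and, because $S^*$ is $m$-dominating with $m\geq k$, every node in $T_{i-1}\setminus S^*$ has at least $k$ neighbors in $S^*$, a fan-lemma/Menger argument (removing any $k-1$ nodes from $T_{i-1}\cup S^*$ leaves $G[S^*]$ with enough connectivity, while every external node retains at least one neighbor in the residual $S^*$) yields the $k$-connectivity of $G[T_{i-1}\cup S^*]$. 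Therefore, for every demand cut $X$ at step $i$ (a Steiner $(T_{i-1},r)$-cut with $|\Gamma_{T_{i-1}}(X)|=i-1$), $k$-connectivity gives $|\Gamma(X)\cap (T_{i-1}\cup S^*)|\geq k$, whence $|\Gamma(X)\cap (S^*\setminus T_{i-1})|\geq k-i+1$. Consequently the fractional vector $\frac{1}{k-i+1}\mathbf{1}_{S^*\setminus T_{i-1}}$ is feasible for \eqref{eq.primal-lp} at step $i$, giving the stated bound.

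Summing all contributions,
\[
w(T_0)+\sum_{i=1}^{k}w(S_i)
\;\leq\; O(\opt)+\sum_{i=1}^{k}\frac{O(i^2)}{k-i+1}\opt
\;\leq\; O(k^2)\sum_{j=1}^{k}\frac{1}{j}\opt
\;=\; O(k^2\log k)\,\opt,
\]
via the substitution $j=k-i+1$ and $(k-j+1)^2\leq k^2$. The main obstacle is the $k$-connectivity of $G[T_{i-1}\cup S^*]$; although intuitively plain, its proof hinges crucially on the hypothesis $m\geq k$, which ensures that each node of $T_{i-1}\setminus S^*$ has enough neighbors in $S^*$ to supply the $k$ internally disjoint paths needed to route through $S^*$ in the Menger/fan-lemma argument.
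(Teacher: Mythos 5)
Your proof is correct and follows essentially the same route as the paper's: bound the step-$i$ LP optimum by $\opt/(k-i+1)$ using the feasibility of $\frac{1}{k-i+1}\mathbf{1}_{S^*\setminus T_{i-1}}$ (via the $k$-connectivity of $G[T_{i-1}\cup S^*]$), then sum the $O(i^2)$ augmentation guarantees against the harmonic series. The only difference is that you spell out the Menger/fan argument for the $k$-connectivity of $G[T_{i-1}\cup S^*]$, which the paper simply asserts.
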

 \begin{proof}
  Let $S^*$ denote an optimal solution for the weighted $(k,m)$-CDS
  problem, and let $x^* \in \{0,1\}^V$ denote its characteristic vector
  (i.e., $x^*(v)=1$ if $v \in S^*$, and $x^*(v)=0$ if $v \in V \setminus
  S^*$).
  For each $S \subseteq V$, we abbreviate $\sum_{v \in S}w(v)$ to $w(S)$.

 We will show that the algorithm described above achieves the approximation
  factor $O(k^2 \log k)$.
  Recall that we use the algorithm given in
  Corollary~\ref{cor.augmentation} for the augmentation
  problem, but with the connectivity requirement $k'$ changed
  from $1$ to $k$.
  Let $S_{k'}$ denote the solution output by the algorithm for the
  augmentation problem when the connectivity requirement is $k'$.
  Note that in this case, the node set $T$ in the input is $\bigcup_{i=0}^{k'-1}S_{i}$, where $S_0$ is the
  $m$-dominating set computed by the algorithm given in Section~\ref{subs.dominating}.
 Note that $S^* \cup \left(\bigcup_{i=0}^{k'-1}S_{i}\right)$ is $k$-connected.
  Hence, when $T=\bigcup_{i=0}^{k'-1}S_{i}$, for each demand cut $X$,
  $\Gamma(X)\setminus T$ includes at least $k-k'+1$ nodes in $S^*$.
  This implies that $x^*/(k-k'+1)$ is a feasible solution to \eqref{eq.primal-lp}
  when the connectivity requirement is $k'$,
  and thus $w(S_{k'}) = O(k^2) \cdot w(S^*)/ (k-k'+1)$.
  The weight of the $(k,m)$-CDS output by our algorithm
  is at most
  $\sum_{k'=0}^k w(S_{k'})
   = O(k^2) \cdot w(S^*) \sum_{k'=0}^k \frac{1}{k-k'+1}
   = O(k^2 \log k) \cdot w(S^*).$
 \end{proof}

 \section{Conclusion}
 \label{sec.conclusion}

 We presented two constant-approximation algorithms for the
 unweighted $(k,m)$-CDS problem
 and for the weighted $(k,m)$-CDS problem in unit
 disk graphs.
 The first of these is a simple algorithm that can be applied to a fairly general class of graphs,
 although it is restricted to the unweighted
 $(k,m)$-CDS problem.
 The second is a primal-dual algorithm that has a better approximation factor and
 can be applied to the weighted $(k,m)$-CDS problem.
 Both algorithms need an assumption that $k \leq m$.

 Besides the $m$-dominating sets,
  there are many other variations of dominating sets in graphs. For example,
  a subset $S$ of a node set $V$ is called an \emph{$m$-tuple dominating set}
  if each node in the graph (including those in $S$)
  has $m$ neighbors in $S$,
  and it is called a \emph{vector dominating set}
  if each node $v$ outside of $S$ has $d(v)$ nodes in $S$
  for a given $|V|$-dimensional vector $d$.
  Refer to~\cite{CicaleseMV13} for other variations.
  Our algorithms for the augmentation problem can be used for
  increasing the connectivity of 
  these  variations if each node outside the solution has $k$
  neighbors
  in the solution, where $k$ is the required connectivity.

 Our primal-dual algorithm
 for the weighted problem
 requires the ellipsoid method
 for computing an $m$-dominating set when $k+1 < m$ (when $m \leq k+1$,
 using the ellipsoid method can be avoided; see the last paragraph of Section~\ref{subs.dominating}).
 However, the ellipsoid method is not practical, so 
another interesting future work will be to invent a
 constant-approximation algorithm for the minimum weight $m$-dominating
 set problem, that does not rely on the ellipsoid method.

\section*{Acknowledgements}
The author thanks anonymous referees for their careful reading and many useful comments.
This work was supported by JST ERATO Grant Number JPMJER1201, and
JSPS KAKENHI Grant Number 17K00040.

\appendix
\section{Improved analysis of Shang et al. for the unweighted $(2,m)$-CDS}
\label{app.rect}

Shang~et~al.\ \cite{Shang:2007jg} gave an
approximation
algorithm for the unweighted $(2,m)$-CDS problem.
They claimed that their algorithm achieves an approximation factor of
$5+25/m$ for $2 \leq m \leq 5$, and $11$ for $m \geq 6$.
However, Shi~et~al.~\cite{ShiZZW16} pointed out that its analysis
contains a flaw. Shi~et~al.~\cite{ShiZZW16} also rectified the analysis, and showed
that
its approximation factor is $15+15/m$ for $2 \leq m \leq 5$
and $21$ for $m \geq 6$.
Simultaneously, 
Shi~et~al.~\cite{ShiZZW16} presented an algorithm for the unweighted $(2,m)$-CDS
problem on general graphs, and proved that their algorithm achieves
approximation factors $7+5/m+2\ln(5+5/m)$ for $2 \leq m\leq 5$,
and 11 for $m \geq 6$ if the graphs are restricted to unit disk graphs.
In this section, we present an improved analysis of  Shang~et~al.~\cite{Shang:2007jg}.
It gives approximation factors 
  $5+35/m$ for $2 \leq m \leq 5$, and 
  $13-5/m$ for $m\geq 6$.
Although these are not better than those of 
Shi~et~al.~\cite{ShiZZW16}, we believe that it is worth noting.

Let us begin with illustrating the algorithm of Shang~et~al.~\cite{Shang:2007jg}.
Let $\opt$ denote the minimum size of $(2,m)$-CDSs.
Let $I_{i}$ be a maximal independent set of $G[V \setminus
\bigcup_{i'=0}^{i-1} I_{i'}]$ for each $i =1,\ldots,m$, where $I_0=\emptyset$.
The algorithm first computes $I_1,\ldots,I_m$,
and $C \subseteq V \setminus I_1$ such that $|C|\leq |I_1|$ and
$G[C \cup I_1]$ is connected.
The following properties are proven.
 \begin{itemize}
  \item[(i)] $|I_i| \leq \max\{5/m,1\}\opt$ for each $i=1,\ldots,m$.
  \item[(ii)] $\bigcup_{i'=1}^{i}I_{i'}$ is an $i$-dominating set for
	      each $i=1,\ldots,m$,
	      and hence $T:= \bigcup_{i=1}^m I_i \cup C$ is a $(1,m)$-CDS.
  \item[(iii)] Each cut-node of $G[T]$ is included in $I_1$ or in $C$.
  \item[(iv)] $|T| \leq (5+5/m)\opt$ for $m\leq 5$,
	      and $|T| \leq (7-5/m)\opt$ for $m \geq 6$
	      (this is slightly better than the conclusion in
	      \cite{Shang:2007jg}, but they proved this).
 \end{itemize}

 After this step,
the algorithm computes a
node set $S$ such that $T\cup S$ is 2-connected as follows.
First, $S$ is initialized to be an empty set.
Then, the algorithm computes
a $T$-cut $X$ such that $|\Gamma_{T}(X)|=1$
and no $T$-path in $G[T \cup S]$
covers $X$.
For this $T$-cut $X$,
the algorithm finds a $T$-path that covers $X$ with at most two inner nodes,
and it adds these inner nodes to the solution $S$.
This procedure is repeated until $T \cup S$ becomes 2-connected,
and the algorithm outputs $T\cup S$
as a $(2,m)$-CDS.

Shang~et~al.\ \cite{Shang:2007jg} claimed that the number of iterations is at most
the number of cut-nodes in $G[T]$, which is 
at most $|I_1| + |C| \leq 2|I_1| \leq 2\max\{5/m,1\}\opt$, due to
properties (i) and (iii).
Since each iteration adds at most two nodes, when the algorithm terminates, 
 $|S| \leq 4\max\{5/m,1\}\opt$.
 This is their analysis of the algorithm.

 However, the number of iterations is not bounded by the number of cut-nodes in $G[T]$.
 This can be observed by considering a star with $n$ leaves.
 The star has only one cut-node. To make it 2-connected,
 we need to add $n-1$ paths to connect the leaves.

 We claim that a correct upper-bound on the number of iterations
 is the number of blocks of $G[T]$,
 and this number is bounded by $|I_1|+|C|+|I_2| -1 \leq 3\max\{5/m,1\}\opt-1$.
 A block of a connected graph is a maximal set of nodes that induces a connected subgraph without cut-nodes.
 Each block consists of at least two nodes.
 If a block consists of more than nodes, then it is a 2-connected component of the graph.
 If a block consists of only two nodes, then it induces an edge.

 For proving our claim, 
  let us consider the tree $F$ that represents the block
  decomposition of $G[T]$.
  Namely, the node set of $F$ is the disjoint union of two node sets $B$
  and $W$. Each node in $B$ corresponds to a block of
  $G[T]$,
  and each node in $W$ corresponds to a cut-node of $G[T]$.
  In what follows, we identify a node in $B$ with the corresponding
  block of $G[T]$, and we identify a node in $W$ with the
  corresponding
  cut-node of $G[T]$.
  A node $b \in B$ and a node $w \in W$ are joined by an edge in $F$
  when the block $b$ includes the cut-node $w$.

  In each iteration of the algorithm, a $T$-path is selected to
  join two different blocks of $G[T]$,
  and the inner nodes of the path are
  added to the solution $S$.
  Let $u$ and $v$ denote the end nodes of a $T$-path $P$, and 
  let $\rho_u$ be a block that includes $u$.
  If more than one block includes $u$, we let $\rho_u$
  denote the one nearest to the blocks including $v$ on $F$;
  $\rho_v$ is defined in the same way.
  Let $x$ be a cut-node of $G[T]$,
  and let $b$ and $b'$ be blocks that include $x$.
  When the algorithm chooses a $T$-path $P$,
  we add a virtual edge 
  that joins
  $b$ and $b'$ 
  if $x$, $b$, and $b'$ are on the path between
  $\rho_u$ and $\rho_v$ on $F$.
  
  \begin{lemma}
   $T\cup S$ is $2$-connected if
   virtual edges induce a connected graph
   on the set of neighbors of
    each cut-node $x$ in $F$.
  \end{lemma}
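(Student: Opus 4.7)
The plan is to show that $G[T \cup S]$ has no cut-vertex, which combined with the connectivity of $G[T]$ (since $T$ is a $(1,m)$-CDS) yields 2-connectedness. Suppose for contradiction that $z$ is a cut-vertex of $G[T \cup S]$. I would distinguish three cases according to where $z$ lies.

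Case 1 ($z \in S$): removing $z$ leaves $G[T]$ untouched and connected, and any other node $z' \in S$ is an inner node of some $T$-path $P$; since $P$ has at most two inner nodes and its endpoints lie in $T$, $z'$ remains connected to $T$ either directly or via its companion inner node. Case 2 ($z \in T \setminus W$): then $G[T] - z$ is still connected, and each $S$-node is attached to an endpoint of its $T$-path distinct from $z$ (since path endpoints are distinct and at most one equals $z$). In both cases $G[T \cup S] - z$ is connected, contradicting that $z$ is a cut-vertex.

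The main case is $z \in W$. Let $b_1, \ldots, b_\ell$ denote the neighbors of $z$ in $F$, which are precisely the blocks of $G[T]$ containing $z$. The connected components of $G[T] - z$ are in bijection with the components of $F - z$; the $i$-th such component contains $b_i - z$ together with the subtree of $F$ hanging off of $b_i$ on the opposite side from $z$. For each virtual edge between $b_i$ and $b_j$ placed at $z$, there exists a $T$-path $P$ with endpoints $u,v$ such that the $F$-path from $\rho_u$ to $\rho_v$ passes through $z$ with $b_i, b_j$ as the flanking blocks. A key observation is that $u \neq z$ and $v \neq z$: if, say, $u = z$, the rule for choosing $\rho_u$ to be the block nearest to $\rho_v$ would place $\rho_u$ on the $\rho_v$-side of $z$, so the $F$-path would avoid $z$ and no virtual edge would be placed at $z$ for $P$. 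Since the interior of $P$ lies in $S \subseteq V \setminus T$ and neither endpoint is $z$, $P$ is a path in $G[T \cup S] - z$ from $u$, which lies in the component of $G[T] - z$ containing $b_i - z$, to $v$, which lies in the component containing $b_j - z$. Thus each virtual edge truly witnesses a merger of its two components inside $G[T \cup S] - z$.

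By hypothesis, the virtual edges at $z$ induce a connected graph on $\{b_1, \ldots, b_\ell\}$, so every pair of components of $G[T] - z$ is merged in $G[T \cup S] - z$. Each remaining $S$-node is attached to the single resulting component via its $T$-path endpoint that is not $z$, so $G[T \cup S] - z$ is connected, contradicting that $z$ is a cut-vertex. Hence no cut-vertex exists and $G[T \cup S]$ is 2-connected. The main technical obstacle is the careful handling of $T$-paths whose endpoint happens to be a cut-node of $G[T]$: one must verify that the convention for choosing $\rho_u, \rho_v$ ensures that virtual edges at $z$ are placed only when $P$ genuinely passes through $z$ in $F$ (i.e., $u, v \neq z$), which is precisely what guarantees the bijection between virtual edges and actual component-mergers in $G[T \cup S] - z$.
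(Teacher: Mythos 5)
Your proof is correct and follows essentially the same route as the paper's: a virtual edge at a cut-node $x$ certifies a $T$-path in $G[T\cup S]$ that links two blocks of $G[T]$ containing $x$ while avoiding $x$, so connectivity of the virtual-edge graph at $x$ kills $x$ as a cut-vertex. You are in fact more careful than the paper, which omits the easy cases $z\in S$ and $z\in T\setminus W$ and does not spell out why the $\rho_u,\rho_v$ convention guarantees the $T$-path's endpoints differ from $x$; those verifications are welcome additions, not a divergence in method.
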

  \begin{proof}
   Suppose that $T\cup S$ is not 2-connected even if the condition holds. Then there 
   exists a cut-node $x$ of $G[T\cup S]$.
   In other words, after removing $x$ from $G[T\cup S]$, some neighbors $y$ and $y'$ of $x$
   are included in the different connected component.
   Since $T$ is $m$-dominating and $m \geq 2$, we can assume that $x,y,y' \in T$.
   Hence $x$ is also a cut-node of $G[T]$.
      
   Let $b$ and $b'$ be the blocks of $G[T]$ including $y$ and $y'$, respectively. 
   Since $y$ and $y'$ are neighbors of $x$, both $b$ and $b'$ include $x$ (i.e., 
   $b$ and $b'$ are neighbors of $x$ in $F$).
   By the condition of the lemma,
   $b$ and $b'$ are connected by a path of the virtual edges. 
   This implies that there exists a path on $G[T\cup S]$ that connects $y$ and $y'$,
   and that does not pass through $x$.
   Hence, $x$ is not a cut-node in $G[T\cup S]$, which is a contradiction.
  \end{proof}

  The following lemma presents a bound on the number of iterations in this algorithm.
 
 \begin{lemma}\label{lem.iterationfor2}
  The number of iterations is at most
  $3\max\{5/m,1\}\opt-1$.
 \end{lemma}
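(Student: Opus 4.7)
The plan is to bound the number of iterations by $|B(G[T])|$, the number of $2$-connected components (blocks) of $G[T]$, and then to show $|B(G[T])| \leq |I_1| + |C| + |I_2| - 1$. Combined with $|I_1|, |I_2| \leq \max\{5/m, 1\}\opt$ from property (i) and $|C| \leq |I_1|$, this gives the claimed bound $3\max\{5/m, 1\}\opt - 1$.

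For the first part, I would argue that each iteration of the algorithm strictly decreases the block count of the current graph $G[T \cup S]$. The chosen demand cut $X$, having $|\Gamma_T(X)| = 1$, pinpoints a unique vertex $w$ that is a cut-vertex of $G[T \cup S]$ separating $X$ from $T \setminus X^+$; the newly added $T$-path provides a bypass around $w$, merging the blocks on the two sides into a single block. Hence the algorithm terminates in at most $|B(G[T])|$ iterations.

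For the second part, I would build $G[T]$ up in layers. First, $G[I_1 \cup C]$ is connected by the construction of $C$, so by the elementary fact that a connected graph on $n$ vertices has at most $n - 1$ blocks, $|B(G[I_1 \cup C])| \leq |I_1| + |C| - 1$. I would then insert the vertices of $I_2$ one at a time: each $v \in I_2 \subseteq V \setminus I_1$ has at least one neighbor in $I_1$ by the maximality of $I_1$, and inserting a vertex with at least one neighbor into a connected graph increases the block count by at most one (a new pendant block in the worst case), yielding the running bound $|I_1| + |C| + |I_2| - 1$ once all of $I_2$ is added. Finally, I would insert the vertices of $I_3, \ldots, I_m$ one at a time. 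Each such $v$ lies in $V \setminus (I_1 \cup I_2)$, and property (ii) with $i = 2$ tells us that $I_1 \cup I_2$ is a $2$-dominating set, so $v$ has at least two neighbors already present in the current graph. The key observation is that inserting a vertex with at least two neighbors into a connected graph cannot increase the block count: the two incident edges, together with any path in the old graph joining the two neighbors, close into a cycle through $v$ that forces all affected old blocks to merge into the single new block containing $v$. Therefore $|B(G[T])| \leq |I_1| + |C| + |I_2| - 1$.

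The main technical obstacle is verifying the last observation, which is intuitive but requires a careful argument via the block-cut tree: one shows that the subtree spanned by the old blocks containing $v$'s neighbors collapses into one block in the new graph, while blocks outside that subtree remain unchanged. Everything else is direct bookkeeping using properties (i) and (ii).
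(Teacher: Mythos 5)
Your proposal is correct, and while it follows the same skeleton as the paper (bound the number of iterations by the number of blocks of $G[T]$, then bound the number of blocks by $|I_1|+|C|+|I_2|-1$), the key counting step is done by a genuinely different argument. The paper charges each $2$-connected component either to an edge of a spanning tree $H$ of $G[I_1\cup C]$ or to a node of $I_2\setminus C$, using property (iii) (every cut-node lies in $I_1\cup C$) to force the relevant neighbors of a non-$I_2$ node into its block; you instead build $G[T]$ incrementally, starting from the connected graph $G[I_1\cup C]$ (at most $|I_1|+|C|-1$ blocks), adding $I_2$ vertices one at a time (each dominated by $I_1$, so at most $+1$ block each), and then adding the remaining vertices (each $2$-dominated by $I_1\cup I_2$ via property (ii), so no increase in block count). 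Your route avoids property (iii) entirely and rests only on the elementary facts that a connected graph on $n$ vertices has at most $n-1$ blocks and that attaching a vertex with at least two already-present neighbors collapses the block-tree path between them; the paper's charging argument is more local but leans on the structural fact about cut-nodes. For the first half, your observation that the single node $w=\Gamma_T(X)$ remains a separator of $X$ from $T\setminus X^+$ in $G[T\cup S]$ (since no $T$-path in $G[T\cup S]$ covers $X$) so that each added path strictly merges blocks is a clean restatement of the paper's potential argument with $\psi_x$; both give the bound of $|B|$ iterations. Both proofs yield the same final bound $3\max\{5/m,1\}\opt-1$.
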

   \begin{proof}
    Let $x$ be a cut-node on $F$,
    and let $\psi_x$ denote the 
    number of connected components 
    induced by the virtual edges
    on the neighbor set of $x$.
    In each iteration of the algorithm,
    $\psi_x$ is decreased by at least one
    for some cut-node $x$.
    When the first iteration begins,
    $\psi_x$ is equal to the degree of $x$ in $F$.
    Since all leaves in $F$ are included in $B$,
    $\sum_{x \in W} \psi_x=|B|+|W|$ holds at the beginning of the first iteration.
    The iterations terminate when
    $\sum_{x \in W} \psi_x=|W|$.
    Hence,
    the number of iterations is at most $|B|$.
    We will determine $|B|$ below.

    Let $H$ be a spanning tree on $G[I_1 \cup C]$.
    We show that each block contains a node in
    $I_2\setminus C$
    or an edge in $H$.
    To see this, suppose that a block $b$ of $G[T]$ contains 
    no edge in $H$.
    If $b$ contains more than one node in $I_1 \cup C$,
    then it includes an edge in $H$.
    Since, by the assumption, this does not happen, $b$ includes
    at most one node in $I_1 \cup C$.
    Hence, there exists a node 
    $v \in \bigcup_{i=2}^m I_i \setminus C$ in $b$.
    If $v \in I_2$, we are done. Suppose that $v \not\in I_2$.
    By property (ii),
    $v$ has neighbors in $I_1$ and in $I_2$.
    By property (iii), $v$ is not a cut-node in $G[T]$, so
    these neighbors must be inside $b$.
    If the neighbor in $I_2$ is included in $C$,
    $b$ contains two nodes in $I_1 \cup C$.
    Since this contradicts the assumption, $b$
    contains a node in $I_2\setminus C$.

    Nodes in $\bigcup_{i=2}^m I_2 \setminus C$ and edges in $G[T]$ are
    not contained in more than one block of $G[T]$.
   The number of edges in $H$ is $|I_1| + |C|-1 \leq
   2\max\{5/m,1\}\opt-1$,
    and $|I_2| \leq \max\{5/m,1\}\opt$ by property (i).
    Hence $|B| \leq 3\max\{5/m,1\}\opt-1$.
   \end{proof} 

  From Lemma~\ref{lem.iterationfor2} and property (iv), we obtain the following
  approximation guarantee for the algorithm.
  \begin{theorem}
   The algorithm of Shang~et~al.\ \cite{Shang:2007jg}
   attains an approximation factor
   $5+35/m$ for $2 \leq m \leq 5$, and 
   $13-5/m$ for $m\geq 6$.
  \end{theorem}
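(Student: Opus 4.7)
The plan is to derive the claimed bound by simply combining Lemma~\ref{lem.iterationfor2} with property~(iv), since the substantive work has already been done. The approximate solution output by Shang et al.'s algorithm is $T\cup S$, where $T$ is the initial $(1,m)$-CDS with $|T|$ bounded by property (iv), and $S$ is the set of inner nodes added during the connectivity-augmentation step. So I need to bound $|S|$ and then add the two estimates.

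Each iteration of the augmentation step appends the inner nodes of a $T$-path that covers a demand cut, and by Lemma~\ref{lem.path} this path has at most two inner nodes. Hence, each iteration adds at most two nodes to $S$. Combining this with Lemma~\ref{lem.iterationfor2}, which bounds the number of iterations by $3\max\{5/k,1\}\opt - 1$, I obtain $|S| \leq 2(3\max\{5/k,1\}\opt - 1) \leq 6\max\{5/k,1\}\opt$.

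From here I would split into two cases matching the statement. For $2 \leq k \leq 5$, $\max\{5/k,1\} = 5/k$, so $|S| \leq (30/k)\opt$; together with $|T|\leq (5+5/k)\opt$ from property~(iv), this yields $|T\cup S| \leq (5+35/k)\opt$. For $k \geq 6$, $\max\{5/k,1\} = 1$, so $|S|\leq 6\opt$; combined with $|T|\leq (7-5/k)\opt$, this yields $|T\cup S| \leq (13-5/k)\opt$. Since all the combinatorial content lives in Lemma~\ref{lem.iterationfor2}, there is no remaining obstacle: the proof is just this arithmetic combination in the two cases.
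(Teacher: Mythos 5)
Your proposal is correct and matches the paper's own (implicit) derivation: the paper obtains the theorem by exactly this combination of Lemma~\ref{lem.iterationfor2} (at most $3\max\{5/k,1\}\opt-1$ iterations), the fact that each iteration adds at most two inner nodes, and property~(iv) bounding $|T|$. The two-case arithmetic you carry out is the intended proof.
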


\end{document}